\documentclass[11pt,reqno]{amsart}
\usepackage[cm]{fullpage}
\usepackage{mathrsfs,amssymb,graphicx,verbatim,amsmath,amsfonts}
\usepackage{paralist}
\usepackage[breaklinks,pdfstartview=FitH]{hyperref}
\usepackage{upgreek}
\usepackage{tikz}
\usepackage[mathscr]{euscript}
\usepackage[T1]{fontenc}
\usepackage{dutchcal}
\usepackage{fourier}
\usepackage{multirow}
\usepackage{caption}

\addtolength{\footskip}{17pt}

\renewcommand{\le}{\leqslant}
\renewcommand{\ge}{\geqslant}

\renewcommand{\setminus}{\smallsetminus}
\allowdisplaybreaks
\newcommand{\ud}[0]{\,\mathrm{d}}
\usepackage{esint}
\usepackage[autostyle]{csquotes}

\newcommand{\n}{\{1,\ldots,n\}}

\renewcommand{\k}{\{1,\ldots,k\}}

\newcommand{\p}{\mathfrak{p}}
\newcommand{\q}{\mathfrak{q}}
\renewcommand{\t}{\vartheta}

\renewcommand{\d}{\delta}

\newcommand{\e}{\varepsilon}
\newcommand{\R}{\mathbb R}
\newcommand{\1}{\mathbf 1}

\newtheorem{theorem}{Theorem}
\newtheorem{lemma}[theorem]{Lemma}

\newtheorem{fact}[theorem]{Fact}

\theoremstyle{remark}
\newtheorem{remark}[theorem]{Remark}

\newtheorem{question}[theorem]{Question}

\renewcommand{\O}{\mathscr{O}}

\renewcommand{\subset}{\subseteq}

\newcommand{\W}{\mathsf{W}}

\newcommand{\f}{\varphi}

\newcommand{\F}{\mathbb F}
\newcommand{\N}{\mathbb N}

\newcommand{\eqdef}{\stackrel{\mathrm{def}}{=}}

\newcommand{\cc}{\mathsf{c}}

\renewcommand{\emptyset}{\varnothing}

\begin{document}

\title{The Andoni--Krauthgamer--Razenshteyn characterization of sketchable norms \\ fails for sketchable metrics}

\author{Subhash Khot}

\address{(S.K.) Department of Computer Science, Courant Institute of Mathematical Sciences\\ New York University\\251 Mercer Street, New York, NY 10012-1185, USA}
\email{khot@cs.nyu.edu}

\author{Assaf Naor}
\address{(A.N.) Mathematics Department\\ Princeton University\\ Fine Hall, Washington Road, Princeton, NJ 08544-1000, USA}
\email{naor@math.princeton.edu}

\thanks{S.K. was supported by NSF CCF-1422159 and the Simons Foundation. A.N. was supported by NSF CCF-1412958, the Packard Foundation and the Simons Foundation. This work  was carried out under the auspices of the Simons Algorithms and Geometry (A\&G) Think Tank.}


\maketitle


\begin{abstract}
Andoni, Krauthgamer and Razenshteyn (AKR)  proved (STOC 2015) that a finite-dimensional normed space $(X,\|\cdot\|_X)$ admits a $O(1)$ sketching algorithm (namely, with $O(1)$ sketch size and $O(1)$ approximation) if and only if for every $\e\in (0,1)$ there exist $\alpha\ge 1$ and an embedding $f:X\to \ell_{1-\e}$ such that $\|x-y\|_X\le \|f(x)-f(y)\|_{1-\e}\le \alpha \|x-y\|_X$ for all $x,y\in X$. The "if part" of this theorem follows from a sketching algorithm of Indyk (FOCS 2000). The contribution of AKR is therefore to demonstrate that the mere availability  of a sketching algorithm implies the existence of the aforementioned geometric realization. Indyk's algorithm shows that the "if part" of the  AKR characterization holds true for any metric space whatsoever, i.e., the existence of an embedding as above implies sketchability even when $X$ is not  a normed space. Due to this, a natural question that AKR posed was whether the assumption that the underlying space is a normed space is needed for  their characterization of sketchability. We resolve this question by proving that for arbitrarily large $n\in \N$ there is an $n$-point metric space $(M(n),d_{M(n)})$ which is $O(1)$-sketchable yet for every $\e\in (0,\frac12)$, if $\alpha(n)\ge 1$ and $f_n:M(n)\to \ell_{1-\e}$ are such that $d_{M(n)}(x,y)\le \|f_n(x)-f_n(y)\|_{1-\e}\le \alpha(n) d_{M(n)}(x,y)$ for all $x,y\in M(n)$, then  necessarily $\lim_{n\to \infty} \alpha(n)= \infty$.
\end{abstract}

\section{Introduction}

We shall start by recalling the notion of sketchability; it is implicit in seminal work~\cite{AMS99} of  Alon, Matias and Szegedy, though the formal definition that is described below was put forth  by   Saks and Sun~\cite{SS02}. This is a crucial and well-studied algorithmic primitive for analyzing massive date sets, with several powerful applications; surveying them here would be needlessly repetitive, so we refer instead to e.g.~\cite{Ind06,AKR18} and the references therein.

Given a set $X$, a function  $K:X\times X\to \R$  is called a nonnegative kernel if $K(x,y)\ge 0$ and $K(x,y)=K(y,x)$ for every $x,y\in X$. In what follows, we will be mainly interested in the geometric setting when the kernel $K=d_X$ is in fact a metric on $X$, but even for that purpose we will also need to consider nonnegative kernels that are not metrics.

Fix $D\ge 1$ and $s\in \N$. Say that a nonnegative kernel $K:X\times X\to [0,\infty)$ is $(s,D)$-sketchable if for every $r>0$ there is a mapping $\mathsf{R}=\mathsf{R}_r:\{0,1\}^s\times \{0,1\}^s\to \{0,1\}$ and a probability distribution over  mappings $\mathsf{Sk}=\mathsf{Sk}_r:X\to \{0,1\}^s$  such that
\begin{equation}\label{def:sketchable}
\inf_{\substack{x,y\in X\\ K(x,y)\le r}} \mathrm{\bf Prob} \Big[\mathsf{R}\big(\mathsf{Sk}(x),\mathsf{Sk}(y)\big)=0\Big]\ge \frac35\qquad\mathrm{and}\qquad \inf_{\substack{x,y\in X\\ K(x,y)>D r}} \mathrm{\bf Prob} \Big[\mathsf{R}\big(\mathsf{Sk}(x),\mathsf{Sk}(y)\big)=1\Big]\ge \frac35.
\end{equation}
The value $\frac35$ in~\eqref{def:sketchable} can be replaced throughout by any constant that is strictly bigger than $\frac12$; we chose to fix an arbitrary value here in order to avoid the need for the notation to indicate dependence on a further parameter. A kernel (or, more formally, a family of kernels) is said to be sketchable if it is $(s,D)$-sketchable  for some $s=O(1)$ and $D=O(1)$.


The way to interpret the above definition is to think of $\mathsf{Sk}$ as a randomized method to assign one of the $2^s$ labels $\{0,1\}^s$ to each point in $X$, and to think of $\mathsf{R}$ as a reconstruction algorithm that takes as input two such labels in $\{0,1\}^s$  and outputs either $0$ or $1$, which stand for  "small" or "large," respectively. The meaning of~\eqref{def:sketchable} becomes  that for every pair $x,y\in X$, if one applies the reconstruction algorithm to the random labels $\mathsf{Sk}(x)$ and $\mathsf{Sk}(y)$, then with substantially high probability  its output is consistent with the value of the kernel $K(x,y)$ at scale $r$ and approximation $D$, namely the algorithm declares "small" if $K(x,y)$ is at most $r$, and it declares "large" if $K(x,y)$ is greater than $D r$.

Suppose that $\alpha,\beta,\theta>0$ and that $K:X\times X\to [0,\infty)$ and $L:Y\times Y\to [0,\infty)$ are nonnegative kernels on the sets $X$ and $Y$, respectively. Suppose also that there is $f:Y\to X$ such that $\alpha L(x,y)^\theta\le K(f(x),f(y))\le \beta L(x,y)^\theta$ for all $x,y\in Y$. It follows formally from this assumption and the above definition that if $K$ is $(s,D)$-sketchable for some $s\in \N$ and $D\ge 1$, then $L$ is $(s,(\beta D/\alpha)^{1/\theta})$-sketchable. Such an "embedding approach" to deduce sketchability is used frequently in the literature. As an example of its many consequences, since $\ell_2$ is sketchable by  the works of Indyk and Motwani~\cite{IM99} and Kushilevitz, Ostrovsky and Rabani~\cite{KOR00}, so is any metric space of negative type, where we recall that a metric space $(X,d)$ is said to be of negative type (see e.g.~\cite{DL97}) if the metric space $(X,\rho)$ with $\rho=\sqrt{d}$ is isometric to a subset of $\ell_2$.


\subsection{The Andoni--Krauthgamer--Razenshteyn characterization of sketchable norms} The following theorem from~\cite{AKR18} is a remarkable result of Andoni, Krauthgamer and Razenshteyn (AKR) that characterizes  those norms that are sketchable\footnote{In~\cite{AKR18}, the conclusion of Theorem~\ref{thm:AKR quote} is proven under a formally weaker assumption, namely  it uses a less stringent notion of sketchability  which allows for the random sketches of the points $x,y\in X$ to be different from each other, and for the reconstruction algorithm to depend on the underlying randomness that was used to produce those sketches. Since our main result, namely Theorem~\ref{thm:our sketching statement} below, is an impossibility statement, it becomes only stronger if we use the simpler and stronger notion of sketchability that we stated above.} in terms of their geometric embeddability into a classical kernel (which is not a metric).

\begin{theorem}[AKR characterization of sketchability]\label{thm:AKR quote} Fix $s\in \N$ and $D\ge 1$. A finite-dimensional normed space $(X,\|\cdot\|_X)$ is $(s,D)$-sketchable if and only if for any $\e\in (0,1)$ there exists $\alpha=\alpha(s,D,\e)>0$ and an embedding $f:X\to \ell_{1-\e}$ such that
\begin{equation*}\label{eq:l 1-eps conclusion}
\forall\, x,y\in X,\qquad \|x-y\|_X\le  \|f(x)-f(y)\|_{1-\e}\le \alpha\|x-y\|_X.
\end{equation*}
\end{theorem}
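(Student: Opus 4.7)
The "if" direction of Theorem~\ref{thm:AKR quote} is immediate from Indyk's $O(1)$-sketching algorithm for $\ell_p$ with $p\in (0,1)$, combined with the formal "embedding implies sketchability" principle quantified earlier in the introduction.  The substance therefore lies in the "only if" direction: assuming $(X,\|\cdot\|_X)$ is $(s,D)$-sketchable, I want to construct, for every $\e\in (0,1)$, a map $f:X\to \ell_{1-\e}$ that is bi-Lipschitz with distortion $\alpha=\alpha(s,D,\e)$.

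My plan has three stages.  \emph{(i)~A Hilbert embedding at each scale.}  For $r>0$, set
$$K_r(x,y)\eqdef \mathrm{\bf Prob}\!\big[\mathsf{R}_r(\mathsf{Sk}_r(x),\mathsf{Sk}_r(y))=0\big],$$
which by~\eqref{def:sketchable} is at least $3/5$ when $\|x-y\|_X\le r$ and at most $2/5$ when $\|x-y\|_X>Dr$.  Using the additive group of $X$ (this is where we need a \emph{norm} and not merely a metric), I would average the translates $K_r(x+t,y+t)$ against a suitable symmetric probability measure $\mu_r$ on $X$, and further symmetrize by $x\mapsto -x$, to produce a symmetric, translation-invariant, real-valued positive-definite function $\psi_r:X\to\R$ whose "variogram" $\psi_r(0)-\psi_r(z)$ is small for $\|z\|_X\le r$ and bounded away from $0$ for $\|z\|_X>Dr$.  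Schoenberg's theorem then yields $\varphi_r:X\to L_2$ with this scale-$r$ separation behaviour.  \emph{(ii)~Gluing scales.}  For the geometric grid $r_k=D^k$ ($k\in\Z$), set
$$\Phi(x)=\bigoplus_{k\in\Z} c_k\,\varphi_{r_k}(x),$$
with weights $c_k>0$ chosen so that the contributions of scales $r_k\ll \|x-y\|_X$ and $r_k\gg \|x-y\|_X$ both telescope, while the window $r_k\sim\|x-y\|_X$ dominates $\|\Phi(x)-\Phi(y)\|_2$; this yields a bi-Lipschitz $\Phi:X\to L_2$ with distortion controlled by $s$ and $D$.  \emph{(iii)~From $L_2$ to $\ell_{1-\e}$.}  A standard Gaussian is $2$-stable and has finite $L_p$-norm for every $p\in (0,2)$, so the classical construction via iid Gaussians embeds any Hilbert space isometrically, up to a $p$-dependent constant, into $L_p$; specializing to $p=1-\e$ and discretizing gives $L_2\hookrightarrow \ell_{1-\e}$ with distortion $O_\e(1)$.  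Composing with $\Phi$ then produces the required embedding.

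The main obstacle I anticipate is step~(i): converting the "black box" kernel $K_r$ from an arbitrary sketching scheme into a positive-definite kernel that genuinely tracks scale $r$ of the norm.  The averaging over translates uses the linear structure of $X$ in an essential way, and it is precisely this maneuver that should fail in a general metric space, consistent with the paper's main result in the abstract.  Beyond establishing positive-definiteness, one must also ensure that the informativeness of $K_r$ (the gap between the $\le r$ and $>Dr$ regimes) is not washed out under averaging, which is likely to require choosing $\mu_r$ with care (uniform on a sufficiently large symmetric ball, or a Gaussian proxy) and perhaps passing to a quadratic form derived from $K_r$ before averaging.  A secondary subtlety is that $\varphi_{r_k}$ need not be contractive at scales much larger than $r_k$, so the gluing in step~(ii) must exploit the boundedness of the $[0,1]$-valued kernels $\psi_{r_k}$ to keep the far-scale contributions under control; once step~(i) is in hand, however, steps~(ii) and~(iii) should be routine.
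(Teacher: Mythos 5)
The paper does not prove Theorem~\ref{thm:AKR quote}; it is a citation of the main result of~\cite{AKR18}, included only as background for the question that Theorem~\ref{thm:our sketching statement} answers. So there is no "paper's own proof" to compare against. That said, your proposed plan contains a fatal gap and also diverges entirely from the actual AKR strategy, so let me address both.

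The fatal problem is step~(ii): if stages~(i) and~(ii) were to succeed as written, you would have produced, for every $(s,D)$-sketchable finite-dimensional normed space $X$, a bi-Lipschitz embedding $\Phi:X\to L_2$ with distortion bounded only in terms of $s$ and $D$. That conclusion is false. The space $\ell_1^n$ is $(O(1),O(1))$-sketchable (via Indyk's $p$-stable sketches, or simply because $\ell_1$ is of negative type), yet $\cc_2(\ell_1^n)=\sqrt{n}\to\infty$. Thus a Hilbert-space intermediate target is impossible in principle; any valid proof must land directly in $\ell_{1-\e}$ without first passing through $L_2$ at bounded distortion. Your step~(iii) (Gaussian/$p$-stable embedding of $L_2$ into $L_p$) is fine as a standalone fact, but it cannot be composed with a nonexistent $\Phi$. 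Relatedly, the Assouad-style multi-scale gluing you describe in~(ii) is exactly the kind of construction that naturally produces embeddings of \emph{snowflakes} of $X$ rather than of $X$ itself, and the whole point of the AKR theorem (and of Fact~\ref{fact:snowflake} in this paper) is that embedding into $\ell_{1-\e}$ is strictly weaker than embedding the snowflake $d_X^{1-\e}$ into $\ell_1$ with a Hilbertian intermediary.

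The actual route in~\cite{AKR18} is quite different and genuinely uses the linear structure of $X$ in a way your plan does not. The chain is: (a)~sketchability of $X$, via information/communication-complexity lower bounds, rules out uniformly bounded-distortion copies of $\ell_\infty^k$ inside $X$; (b)~by the Maurey--Pisier theorem, absence of $\ell_\infty^k$'s is equivalent to $X$ having finite Rademacher cotype $q$ with a controlled constant; (c)~a normed space of finite cotype admits a \emph{linear} embedding into $L_{1-\e}$ with distortion depending only on $\e$, the cotype exponent, and the cotype constant --- this relies on Nikishin/Maurey-type factorization through $L_p$ for $p<1$. The role of the norm is therefore not the translation-averaging you propose in step~(i); it is that the entire Banach-space machinery of type/cotype and factorization theory is available. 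This also explains why the theorem cannot extend to general metrics, which is precisely what Theorem~\ref{thm:our sketching statement} of this paper establishes. I recommend abandoning the Schoenberg-plus-gluing route and reading~\cite{AKR18} directly for the correct argument.
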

Thus, a finite-dimensional normed space  is sketchable if and only if it can be realized as a subset of a the classical sequence space $\ell_{1-\e}$ so that the kernel $\|\cdot\|_{1-\e}$ reproduces faithfully (namely, up to factor $\alpha$) all the pairwise distances in $X$. See~\cite[Theorem~1.2]{AKR18} for an explicit dependence in Theorem~\ref{thm:AKR quote} of $\alpha(s,D,\e)$ on the parameters $s,D,\e$.

\subsubsection*{$L_p$ space notation}In Theorem~\ref{thm:AKR quote} and below, we use the following standard notation for $L_p$ spaces. If $p\in (0,\infty)$ and $(\Omega,\mu)$ is a measure space, then $L_p(\mu)$ is the set of (equivalence classes up to measure $0$ of) measurable functions $\f:\Omega\to \R$ with $\int_\Omega |\f(\omega)|^p\ud\mu(\omega)<\infty$. When $\mu$ is the counting measure on $\N$, write $L_p(\mu)=\ell_p$. When $\mu$ is the counting measure on $\n$ for some $n\in \N$, write $L_p(\mu)=\ell_p^n$. When $\mu$ is the Lebesgue measure on $[0,1]$, write $L_p(\mu)=L_p$. When the underlying measure is clear from the context (e.g.~counting measure or Lebesgue measure), one sometimes writes $L_p(\mu)=L_p(\Omega)$. The $L_p(\mu)$ (quasi)norm is defined by setting $\|\f\|_p^p=\int_\Omega |\f(\omega)|^p\ud\mu(\omega)$ for $\f\in L_p(\mu)$.
While if $p\ge 1$, then $(\f,\psi)\mapsto \|\f-\psi\|_p$  is a metric on $L_p(\mu)$, if $p=1-\e$ for some $\e\in (0,1)$, then $\|\cdot\|_{1-\e}$ is not a metric; if $L_{1-\e}(\mu)$ is infinite dimensional, then $\|\cdot\|_{1-\e}$  is not even equivalent to a metric in the sense that there do not exist any $c,C\in (0,\infty)$ and a metric $d:L_{1-\e}(\mu)\times L_{1-\e}(\mu)\to [0,\infty)$ such that $cd(\f,\psi)\le \|\f-\psi\|_{1-\e}\le Cd(\f,\psi)$ for all $\f,\psi\in L_p(\mu)$. Nevertheless, $\|\cdot\|_{1-\e}$ is  a nonnegative kernel on $L_p(\mu)$ and there is a canonical metric $\mathfrak{d}_{1-\e}$ on $L_p(\mu)$, which is given by
\begin{equation}\label{eq:metric on <1}
\forall\, \f,\psi\in L_{1-\e}(\mu),\qquad \mathfrak{d}_{1-\e}(\f,\psi)\eqdef \|\f-\psi\|_{1-\e}^{1-\e}=\int_{\Omega}|\f(\omega)-\psi(\omega)|^{1-\e}\ud \mu(\omega).
\end{equation}
See the books~\cite{LT77,LT79} and~\cite{KPR84} for much more on the structure for $L_p(\mu)$ spaces when $p\ge 1$ and $0<p< 1$, respectively.

\subsubsection{Beyond norms?} Fix $\e\in (0,1)$. The sketchability of  the nonnegative kernel on $\ell_{1-\e}$ that is given by $\|\f-\psi\|_{1-\e}$ for $\f,\psi\in \ell_{1-\e}$  was proved by Indyk~\cite{Ind06} (formally, using the above  terminology it is sketchable provided $\e$ is bounded away from $0$; when $\e\to 0$ the space $s=s(\e)$ of Indyk's algorithm becomes unbounded). Thus, any metric space $(M,d_M)$ for which there exists $\alpha\in [1,\infty)$ and  an embedding $f:M\to \ell_{1-\e}$ that satisfies
\begin{equation}\label{eq:metric version into 1-eps}
\forall\, x,y\in M,\qquad d_M(x,y)\le \|f(x)-f(y)\|_{1-\e}\le \alpha d_M(x,y)
\end{equation}
is sketchable with sketch size $O_\e(1)$ and approximation $O(\alpha)$. Therefore, the "if part" of Theorem~\ref{thm:AKR quote}  holds for any metric space whatsoever, not only for norms. The "only if" part of Theorem~\ref{thm:AKR quote}, namely showing that the mere  availability of a sketching algorithm for a normed space implies that it can be realized faithfully as a subset of $\ell_{1-\e}$, is the main result of~\cite{AKR18}. This major achievement demonstrates that a fundamental algorithmic primitive {\em coincides} with a  geometric/analytic property that has been studied long before sketchability was introduced (other phenomena of this nature were discovered in the literature, but they are rare). The underlying reason for Theorem~\ref{thm:AKR quote} is deep, as the proof in~\cite{AKR18} relies on a combination of major results from the literature on functional analysis and communication complexity.

A natural question that Theorem~\ref{thm:AKR quote} leaves open is whether one could obtain the same result for $\e=0$, namely for embeddings into $\ell_1$. As discussed in~\cite{AKR18}, this is equivalent to an old question~\cite{Kwa70} of Kwapie\'n; a positive result in this direction (for a certain class of norms) is derived in~\cite{AKR18} using classical partial progress of Kalton~\cite{Kal85} on Kwapie\'n's problem, but fully answering this longstanding question  seems  difficult (and it may very well have a negative answer).

Another natural question that Theorem~\ref{thm:AKR quote} leaves open is whether its assumption that the underlying metric space is a norm is needed. Given that the "if part" of Theorem~\ref{thm:AKR quote}  holds for any metric space, this amounts to understanding whether  a sketchable metric space $(M,d_M)$ admits for every $\e\in (0,1)$ an embedding $f:M\to \ell_{1-\e}$ that satisfies~\eqref{eq:metric version into 1-eps}. This was a central open question of~\cite{AKR18}. Theorem~\ref{thm:our sketching statement} below resolves this question. It should be noted that the authors of~\cite{AKR18} formulated their question while hinting that they suspect that the answer is negative, namely in~\cite[page~893]{AKR18} they wrote {\em "we are not aware of any counter-example to the generalization of Theorem~1.2 to general metrics"} (Theorem~1.2 in~\cite{AKR18} corresponds to Theorem~\ref{thm:AKR quote} here). One could therefore view Theorem~\ref{thm:our sketching statement} as a confirmation of a prediction of~\cite{AKR18}.

\begin{theorem}[failure of the AKR characterization for general metrics]\label{thm:our sketching statement} For arbitrarily large $n\in \N$ there exists an $n$-point metric space $(M(n),d_{M(n)})$ which is $\big(O(1),O(1)\big)$-sketchable, yet for every $\e\in \big(0,\frac12\big)$ and $\alpha\ge 1$, if there were a mapping $f:M(n)\to \ell_{1-\e}$ that satisfies $d_{M(n)}(x,y)\le \|f(x)-f(y)\|_{1-\e}\le \alpha d_{M(n)}(x,y)$ for all $x,y\in M(n)$, then necessarily
\begin{equation}\label{eq:our sketching lower}
\alpha\gtrsim (\log\log n)^{\frac{1-2\e}{2(1-\e)}}.
\end{equation}
\end{theorem}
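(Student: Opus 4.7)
The plan is to take for $M(n)$ an $n$-point metric space of \emph{negative type}, meaning $d_{M(n)}=\rho^{2}$ for a metric $\rho$ that embeds isometrically into Hilbert space (Schoenberg's theorem). Such a choice is automatically $\bigl(O(1),O(1)\bigr)$-sketchable: the $\ell_2$-distance is $(O(1),O(1))$-sketchable by Indyk--Motwani and Kushilevitz--Ostrovsky--Rabani, and the observation in the paragraph preceding the definition~\eqref{def:sketchable}, applied with $\theta=2$ and $K=\rho$, shows that squaring a sketchable kernel preserves sketchability up to squaring the approximation factor. The task therefore reduces to exhibiting an $n$-point metric of negative type for which~\eqref{eq:our sketching lower} holds.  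Two natural source constructions come to mind: (i) a Khot--Vishnoi-style quotient of the discrete cube $\{-1,1\}^k$ by a pseudorandom subgroup, whose metric is built from the Ornstein--Uhlenbeck noise operator and is of negative type by construction; (ii) a discretization of the Heisenberg group, which is of negative type (by the Lee--Naor ``Coulomb'' embedding) and resists $L_1$-embeddings quantitatively (Cheeger--Kleiner--Naor).

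For the non-embedding half, given an $\alpha$-distortion map $f:M(n)\to\ell_{1-\e}$ as in the theorem, raising to the power $1-\e$ and using~\eqref{eq:metric on <1} converts it into a bi-Lipschitz embedding, of distortion $\alpha^{1-\e}$, of the concave-power metric $\bigl(M(n),d_{M(n)}^{1-\e}\bigr)$ into the genuine metric space $\bigl(\ell_{1-\e},\mathfrak{d}_{1-\e}\bigr)$:
\begin{equation*}
d_{M(n)}(x,y)^{1-\e}\le\mathfrak{d}_{1-\e}\bigl(f(x),f(y)\bigr)\le\alpha^{1-\e}d_{M(n)}(x,y)^{1-\e}.
\end{equation*}
The gain of this reformulation is that the target is now a genuine $L_p$-metric, which admits a useful structure: for each $\omega$ in the index set of $\ell_{1-\e}$, the kernel $(x,y)\mapsto\bigl|f(x)(\omega)-f(y)(\omega)\bigr|^{1-\e}$ is itself a pseudometric on $M(n)$ (since $u\mapsto u^{1-\e}$ is subadditive for $\e\in(0,1)$), so $\mathfrak{d}_{1-\e}(f(\cdot),f(\cdot))$ is a positive integral over $\omega$ of real-line $(1-\e)$-pseudometrics pulled back through the coordinate functions of $f$.

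The heart of the argument is then a Poincar\'e-type inequality on $M(n)$ that obstructs such a decomposition. One averages $d_{M(n)}^{1-\e}$ from below over a carefully chosen pair distribution on $M(n)$ (exploiting the expansion/structure of the Khot--Vishnoi or Heisenberg construction), while bounding from above the average contribution of any single real-valued coordinate via a hypercontractive/Fourier-analytic computation that uses the negative-type structure $d_{M(n)}=\rho^2$. Balancing the two sides, with the $\tfrac12$ in the exponent arising from the negative-type ``square root'' and the $1-\e$ from the target concavity, yields
\begin{equation*}
\alpha^{1-\e}\gtrsim(\log\log n)^{(1-2\e)/2},
\end{equation*}
which is equivalent to~\eqref{eq:our sketching lower}. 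The main obstacle is carrying out this Poincar\'e inequality with the correct exponent: classical non-embedding results are calibrated for $L_1$, and adapting the hypercontractive lower bound through the $(1-\e)$-th power transformation while insisting that $M(n)$ remain of negative type (to secure sketchability) is what drives the double-logarithmic (rather than logarithmic) loss in~\eqref{eq:our sketching lower}.
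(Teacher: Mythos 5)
Your high-level skeleton matches the paper's: use a metric of negative type to secure sketchability (via Hilbert-space sketching and the squaring observation), reformulate the $\ell_{1-\e}$-embedding problem as a bi-Lipschitz embedding of the $(1-\e)$-snowflake into a genuine $L_1$ target, and then obstruct that embedding with a Poincar\'e-type inequality. The arithmetic at the end ($\alpha^{1-\e}\gtrsim(\log\log n)^{(1-2\e)/2}$, then undo the power) is also right. However there are two genuine gaps.

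First, option (ii) in your list of candidate constructions --- the Heisenberg group --- cannot work, and the paper explicitly explains why: the Heisenberg group is a doubling metric space, and by Assouad's theorem any doubling space admits, for every $\e\in(0,1)$, an embedding into $\ell_{1-\e}$ of the form~\eqref{eq:metric version into 1-eps} with $\alpha\lesssim_\e 1$. In other words the very $L_1$-non-embeddability that makes the Heisenberg group useful against Sparsest Cut is a property of the unsnowflaked metric only, and snowflaking kills it. The paper makes the stronger point that this is precisely the qualitative advantage of the Khot--Vishnoi--type construction over the Heisenberg group. The space actually used is the Devanur--Khot--Saket--Vishnoi quotient $M/\langle\mathfrak{S}_k\rangle$ of a large subset $M\subset\F_2^k$ by the cyclic shift group, with $n\asymp 2^k/k$ so that $\log k\asymp\log\log n$.

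Second, the "Poincar\'e-type inequality" is where all the work lives, and your sketch leaves out the ingredient that makes the theorem true with a nontrivial exponent in $\e$. The natural Cheeger inequality proved in~\cite{KN06} using the Kahn--Kalai--Linial theorem,~\eqref{eq:quote KN KKL}, compares averages of $\|f(Gx)-f(Gy)\|_1$ to the sum over \emph{single-coordinate edges}; after snowflaking the right-hand side of that inequality is $\alpha k/\log k$ while the left-hand side is only $\gtrsim k^{1-\e}$, giving $\alpha\gtrsim(\log k)/k^{\e}$, which tends to zero for any fixed $\e>0$. So the KKL-based inequality you would get "for free" from the expansion of the construction is useless here. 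The paper instead replaces single-coordinate edges by the correlated-pair measure $\vartheta^{\p}$ with $\p\asymp1/\log k$, so that the right-hand side scales like $\alpha\sqrt{\log k}\,(k/\log k)^{1-\e}$; to lower-bound the outgoing $\vartheta^{\p}$-mass of a $G$-invariant set one needs a Fourier tail estimate for Boolean functions with small low-level influences, namely Bourgain's noise sensitivity theorem in the sharp form of Kindler--Kirshner--O'Donnell, combined with Fact~\ref{fact} (transitivity of $G$ forces all level-$m$ influences of a $G$-invariant set to be $\le m/k$). This is the decisive step, and it is genuinely different from --- and not derivable from --- the KKL-based route that your "hypercontractive lower bound" phrase suggests.
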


\noindent{\em Asymptotic notation.} In addition to the usual $"O(\cdot),o(\cdot), \Omega(\cdot),\Theta(\cdot)"$  notation, it will be convenient to use throughout this article (as we already did in~\eqref{eq:our sketching lower}) the following (also standard) asymptotic notation. Given two quantities $Q,Q'>0$, the notations
$Q\lesssim Q'$ and $Q'\gtrsim Q$ mean that $Q\le CQ'$ for some
universal constant $C>0$. The notation $Q\asymp Q'$
stands for $(Q\lesssim Q') \wedge  (Q'\lesssim Q)$. If  we need to allow for dependence on parameters, we indicate this by subscripts. For example, in the presence of  auxiliary objects (e.g.~numbers or spaces) $\phi,\mathfrak{Z}$, the notation $Q\lesssim_{\phi,\mathfrak{Z}} Q'$ means that $Q\le C(\phi,\mathfrak{Z})Q' $, where $C(\phi,\mathfrak{Z}) >0$ is allowed to depend only on $\phi,\mathfrak{Z}$; similarly for the notations $Q\gtrsim_{\phi,\mathfrak{Z}} Q'$ and $Q\asymp_{\phi,\mathfrak{Z}} Q'$.

\medskip

We will see that the metric spaces $\{(M(n),d_{M(n)})\}_{n=1}^\infty$ of Theorem~\ref{thm:our sketching statement} are of negative type, so by the above discussion their sketchability follows from the sketchability of Hilbert space~\cite{IM99,KOR00}. In fact, these metric spaces are (subsets of) the metric spaces of negative type that were considered by Devanur, Khot, Saket and Vishnoi in~\cite{DKSV06} as integrality gap examples for the Goemans--Linial semidefinite relaxation of the Sparsest Cut problem with uniform demands. Hence, our contribution  is the geometric aspect of Theorem~\ref{thm:our sketching statement}, namely demonstrating  the non-embeddability  into $\ell_{1-\e}$, rather than its algorithmic component (sketchability). This is a special case of the more general geometric phenomenon of Theorem~\ref{thm:nonembed main} below, which is our main result. It amounts to  strengthening our work~\cite{KN06} which investigated the $\ell_1$ non-embeddability of quotients of metric spaces using Fourier-analytic techniques. Here, we derive the (formally stronger) non-embeddability into $\ell_1$ of snowflakes of such quotients (the relevant terminology is recalled in Section~\ref{sec:embeddings} below). It suffices to mention at this juncture (with further discussion in Section~\ref{Bourgain} below) that on a conceptual level,  the strategy of~\cite{KN06} (as well as that of~\cite{KR09,DKSV06}) for proving non-embeddability using the classical theorem~\cite{KKL88} of Kahn, Kalai and Linial (KKL) on influences of variables does not imply  the required $\ell_1$ non-embeddability of snowflakes of quotients. Instead, we revisit the use of Bourgain's noise sensitivity theorem~\cite{Bou02}, which was applied for other (non-embeddability) purposes in~\cite{KV15,KN06}, but subsequent work~\cite{KR09,DKSV06} realized  that one could use the much simpler KKL theorem in those contexts (even yielding quantitative improvements). Thus, prior to the present work it seemed that, after all, Bourgain's theorem does not have a decisive use in metric embedding theory, but here we see that in fact it has a qualitative advantage over the KKL theorem in some geometric applications.

The present work also shows that the Khot--Vishnoi approach~\cite{KV15} to the  Sparsest Cut integrality gap has a further  qualitative advantage (beyond its relevance to the case of uniform demands) over the use of the Heisenberg group for this purpose~\cite{LN06}, which yields a better~\cite{CKN11} (essentially sharp~\cite{NY18}) lower bound. Indeed, the Heisenberg group is a $O(1)$-doubling metric space (see e.g.~\cite{Hei01}), and by Assouad's embedding theorem~\cite{Ass83} any such  space admits for any $\e\in (0,1)$ an embedding into $\ell_{1-\e}$ which satisfies~\eqref{eq:metric version into 1-eps} with $\alpha\lesssim_\e 1$ (for the connection to Assouad's theorem, which may not be apparent at this point, see Fact~\ref{fact:snowflake} below). Thus, despite its quantitative superiority as an integrality gap example for Sparsest Cut with general demands, the Heisenberg group cannot yield Theorem~\ref{thm:our sketching statement} while the Khot--Vishnoi spaces do (strictly speaking, we work here with a simpler different construction than that of~\cite{KV15}, but an inspection of the ensuing proof reveals that one could have also used the metric spaces of~\cite{KV15} to answer the question of~\cite{AKR18}).

\begin{question} The obvious question that is left open by Theorem~\ref{thm:our sketching statement} is to understand what happens when $\e\in \big[\frac12,1\big)$. While we established a marked qualitative gap vis \`a vis sketchability between the behaviors of general normed spaces and general metric spaces, the possibility remains that there exists some $\e_0\in \big[\frac12,1\big)$ such that any sketchable metric space $(M,d_M)$ admits an embedding into $\ell_{1-\e_0}$ that satisfies~\eqref{eq:metric version into 1-eps} with $\alpha=O(1)$; perhaps one could even take $\e_0=\frac12$ here. This possibility is of course tantalizing, as it would be a complete characterization of sketchable metric spaces  that is nevertheless qualitatively different from its counterpart for general normed spaces. At present, there is insufficient evidence to speculate that this is so, and it seems more likely that other counterexamples could yield a statement that is analogous to Theorem~\ref{thm:our sketching statement} also in the range $\e\in \big[\frac12,1\big)$, though a new idea would be needed for that.
\end{question}

\begin{question} Even in the range $\e\in \big(0,\frac12\big)$ of Theorem~\ref{thm:our sketching statement}, it would be interesting to determine if one could improve~\eqref{eq:our sketching lower} to $\alpha \gtrsim (\log n)^{c(\e)}$ for some $c(\e)>0$ (see Remark~\ref{rem:KM} below for a technical enhancement that yields an asymptotic improvement of~\eqref{eq:our sketching lower}  but does not achieve such a bound). For the corresponding question when $\e=0$, namely embeddings into $\ell_1$, it follows from~\cite{NY18} that one could improve~\eqref{eq:our sketching lower} to $\alpha\gtrsim \sqrt{\log n}$. However, the example that exhibits this stronger lower bound for $\e=0$ is a doubling metric space, and hence by Assouad's theorem~\cite{Ass83} for every $\e>0$ it does admit an embedding into $\ell_{1-\e}$ that satisfies~\eqref{eq:our sketching lower} with $\alpha\lesssim_\e 1$. Note that by~\cite{NRS05,ALN08} we see that if an $n$-point metric space $(M,d_M)$ is sketchable for the reason that for some $\theta\in (0,1]$ the metric space $(M,d_M^\theta)$ is bi-Lipschitz to a subset of $\ell_2$, then~\eqref{eq:our sketching lower} holds for $\e=0$ and $\alpha\lesssim (\log n)^{1/2+o(1)}$. It would be worthwhile to determine if this upper bound on $\alpha$ (for $\e=0$) holds for any sketchable metric space whatsoever, i.e., not only for those whose sketchability is due to the fact that some power of the metric is Hilbertian. It seems plausible that the latter question is accessible using available methods.
\end{question}

\begin{remark}\label{rem:KM} The lower bound~\eqref{eq:our sketching lower} can be improved by incorporating the "enhanced short code argument" of  Kane and Meka~\cite{KM13} (which is in essence a derandomization step) into the ensuing reasoning. This yields a  more complicated construction for which~\eqref{eq:our sketching lower} can be improved to $\alpha\ge \exp\big(c(1-2\e)\sqrt{\log \log n}\big)$ for some universal constant $c>0$. Because it becomes a significantly more intricate case-specific  argument that does  not pertain to the more general geometric phenomenon that we study in Theorem~\ref{thm:nonembed main},  we will not include the technical details of this quantitative enhancement of Theorem~\ref{thm:our sketching statement} in the present extended abstract (the full version will contain more information).
\end{remark}

\subsection{Metric embeddings}\label{sec:embeddings} The distortion of a metric space $(U,d_U)$ in a metric space $(V,d_V)$ is a numerical invariant that is denoted $\cc_{(V,d_V)}(U,d_U)$ and defined to be the infimum over those $\alpha\in [1,\infty]$ for which there exist an embedding $f:U\to V$ and a scaling factor $\lambda\in (0,\infty)$ such that $\lambda d_U(x,y)\le d_V\big(f(x),f(y)\big)\le \alpha\lambda d_U(x,y)$ for all distinct $x,y\in U$. Given $p\ge 1$, the infimum of $\cc_{(V,d_V)}(U,d_U)$ over all possible\footnote{When $(U,d_U)$ is a finite metric space, it suffices to consider embeddings into $\ell_p$ rather than a general $L_p(\mu)$ space, as follows via a straightforward approximation by simple functions. We warn that this is not so for general (infinite) separable metric spaces, in which case one must consider embeddings into $L_p$; by~\cite[Corollary~1.5]{CK13} there is even a doubling subset of $L_1$ that does not admit a bi-Lipschitz embedding into $\ell_1$.} $L_p(\mu)$ spaces $(V,d_V)$ is denoted $\cc_p(U,d_U)$.

\subsubsection{Snowflakes} Because for every $\e\in (0,1)$ the quasi-norm $\|\cdot\|_{1-\e}$ does not induce a metric on $\ell_{1-\e}$, the embedding requirement~\eqref{eq:metric version into 1-eps} does not fit into the above standard metric embedding framework. However, as we explain in Fact~\ref{fact:snowflake} below, it is possible to situate~\eqref{eq:metric version into 1-eps}  within this framework (even without mentioning $\ell_{1-\e}$ at all) by considering embeddings of the $(1-\e)$-snowflake of a finite metric space into $\ell_1$. Recall the commonly used terminology (see e.g.~\cite{DS97}) that the $(1-\e)$-snowflake of a metric space $(M,d_M)$ is the metric space $(M,d_M^{1-\e})$.

\begin{fact}\label{fact:snowflake} Let $(M,d_M)$ be a finite\footnote{The only reason  for the finiteness assumption here (the present article deals only with finite metric space) is to ensure that the embedding is into $\ell_{1-\e}$ rather than a more general $L_{1-\e}(\mu)$ space. For embeddings of finite-dimensional normed spaces, i.e., the setting of~\cite{AKR18}, a similar reduction to embeddings into $\ell_{1-\e}$ is possible using tools from~\cite{Nik72,AMM85,BL00}.} metric space and fix $\e\in (0,1)$. The quantity $\cc_1(M,d_M^{1-\e})^{\frac{1}{1-\e}}$ is equal to the infimum over those $\alpha\ge 1$ for which there exists an embedding $f:M\to \ell_{1-\e}$ that satisfies~\eqref{eq:metric version into 1-eps}.
\end{fact}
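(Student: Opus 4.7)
The plan is to establish equality by proving each of the two opposing inequalities between $\cc_1(M,d_M^{1-\e})^{\frac{1}{1-\e}}$ and the right-hand side infimum, which I will denote by $\alpha^*$. The central observation comes from~\eqref{eq:metric on <1}: since $\mathfrak{d}_{1-\e}(\phi,\psi)=\|\phi-\psi\|_{1-\e}^{1-\e}$ is an honest metric on $L_{1-\e}$, raising the chain $d_M(x,y)\le \|f(x)-f(y)\|_{1-\e}\le \alpha d_M(x,y)$ to the $(1-\e)$-th power shows that $f:M\to \ell_{1-\e}$ is admissible with parameter $\alpha$ if and only if the same $f$, regarded as a map into $(L_{1-\e},\mathfrak{d}_{1-\e})$, has bi-Lipschitz distortion at most $\alpha^{1-\e}$ for the snowflaked metric $d_M^{1-\e}$. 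The fact therefore reduces to showing that $(M,d_M^{1-\e})$ admits the same infimal bi-Lipschitz distortion into $L_1$ as into $(L_{1-\e},\mathfrak{d}_{1-\e})$, which I would deduce by constructing isometric embeddings between these two metric spaces in both directions.

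For the direction $\cc_1(M,d_M^{1-\e})\le (\alpha^*)^{1-\e}$, I would exhibit an isometric embedding $G:(L_{1-\e},\mathfrak{d}_{1-\e})\hookrightarrow L_1$. Since $\mathfrak{d}_{1-\e}(\phi,\psi)=\int|\phi-\psi|^{1-\e}\ud\mu$ integrates pointwise $(1-\e)$-snowflake distances, it suffices to build an isometric embedding $H:(\R,|a-b|^{1-\e})\hookrightarrow L_1$ and to extend it pointwise via $G(\phi)(\omega,z)\eqdef H(\phi(\omega))(z)$, after which Fubini's theorem gives $\|G(\phi)-G(\psi)\|_1=\mathfrak{d}_{1-\e}(\phi,\psi)$. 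For the one-dimensional snowflake I would invoke Schoenberg's theorem: the kernel $|a-b|^{2(1-\e)}$ is conditionally negative definite on $\R$ (as $2(1-\e)\in(0,2]$), so $(\R,|a-b|^{1-\e})$ embeds isometrically into $L_2$, and composition with the standard linear Gaussian isometry $L_2\hookrightarrow L_1$ supplies $H$. Post-composing any admissible $f:M\to\ell_{1-\e}$ with $G$ then produces an $L_1$-embedding of $(M,d_M^{1-\e})$ of distortion $\alpha^{1-\e}$, giving the desired bound.

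For the reverse inequality $\alpha^*\le \cc_1(M,d_M^{1-\e})^{\frac{1}{1-\e}}$, I would construct an isometric embedding $F:L_1\hookrightarrow (L_{1-\e},\mathfrak{d}_{1-\e})$ via the following signed-indicator device. For each $a\in\R$, let
\[
\psi_a(t)\eqdef \sign(a)\cdot\1_{[\min(0,a),\max(0,a)]}(t),\qquad t\in\R.
\]
A short case analysis on the signs of $a,b\in\R$ shows that $\psi_a-\psi_b$ is the signed indicator of an interval of length $|a-b|$, so $\int_\R|\psi_a(t)-\psi_b(t)|^{1-\e}\ud t=|a-b|$ for all $a,b\in\R$, independently of the value of the exponent. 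Extending pointwise, for $u\in L_1(\Omega,\mu)$ set $F(u)(\omega,t)\eqdef \psi_{u(\omega)}(t)$; Fubini's theorem then yields $\mathfrak{d}_{1-\e}(F(u),F(v))=\|u-v\|_1$, realizing the sought isometric embedding. Composing $F$ with an $L_1$-embedding $g:M\to L_1$ of $(M,d_M^{1-\e})$ of distortion arbitrarily close to $\cc_1(M,d_M^{1-\e})$ and extracting $(1-\e)$-th roots of the resulting distortion chain yields an embedding of $(M,d_M)$ into $L_{1-\e}$ admissible in the sense of~\eqref{eq:metric version into 1-eps} with $\alpha$ arbitrarily close to $\cc_1(M,d_M^{1-\e})^{\frac{1}{1-\e}}$.

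The remaining technicality is that $\alpha^*$ is defined by embeddings into $\ell_{1-\e}$ rather than into a general $L_{1-\e}(\mu)$ space; since $M$ is finite, this is handled by a routine simple-function approximation that replaces the $L_{1-\e}$-valued image by a finitely-supported one at arbitrarily small multiplicative cost in the distortion (as alluded to in the footnote preceding the statement). The main obstacle in this route is really identifying the construction of $F$: once the one-variable identity for the signed indicators $\psi_a$ is in hand, everything else becomes straightforward bookkeeping with the exponent $1-\e$, and the snowflake embedding of the real line underlying $G$ is classical.
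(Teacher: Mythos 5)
Your proof is correct and follows essentially the same two-inequality structure as the paper's argument: raise~\eqref{eq:metric version into 1-eps} to the $(1-\e)$-th power to translate it into a bi-Lipschitz statement for the snowflaked metric, then transfer between $L_1$ and $(L_{1-\e},\mathfrak{d}_{1-\e})$ via isometric embeddings in each direction, followed by a simple-function approximation to land in $\ell_{1-\e}$ using finiteness of $M$. The only difference is stylistic: where the paper cites the literature for the two isometric embeddings (\cite{BDK65,WW75} for $(L_{1-\e},\mathfrak{d}_{1-\e})\hookrightarrow L_1$, and equation~(2) of~\cite{Nao10} for $\ell_1\hookrightarrow(L_{1-\e},\mathfrak{d}_{1-\e})$), you construct them explicitly (Schoenberg plus the Gaussian $L_2\hookrightarrow L_1$ map for the former, the signed-indicator device for the latter), which makes the argument more self-contained but is not a different route.
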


\begin{proof} Suppose that $f:M\to \ell_{1-\e}$ satisfies~\eqref{eq:metric version into 1-eps}.  Then, recalling the notation~\eqref{eq:metric on <1} for the metric $\mathfrak{d}_{1-\e}$ on $\ell_{1-\e}$, we have $d_M(x,y)^{1-\e}\le \mathfrak{d}_{1-\e}(f(x),f(y))\le \alpha^{1-\e} d_M(x,y)^{1-\e}$ for all $x,y\in M$. It follows from general principles~\cite{BDK65,WW75} that the metric space $(\ell_{1-\e},\mathfrak{d}_{1-\e})$ admits an isometric embedding into an $L_1(\mu)$ space (an explicit formula for such an embedding into $L_1(\R^2)$ can be found in~\cite[Remark~5.10]{MN04}). Hence, $\cc_1(M,d_M^{1-\e})\le \alpha^{1-\e}$. Conversely, there is an explicit embedding (see equation~(2) in~\cite{Nao10}) $T:\ell_{1}\to L_{1-\e}(\N\times \R)$ which is an isometry when one takes the metric $\mathfrak{d}_{1-\e}$ on $L_{1-\e}(\N\times \R)$. Hence, if $\beta> \cc_1(M,d_M^{1-\e})$, then take an embedding $g:M\to \ell_1$ such that $d_M(x,y)^{1-\e}\le \|g(x)-g(y)\|_1\le \beta d_M(x,y)^{1-\e}$ for all $x,y\in M$ and consider the embedding $T\circ g$ which satisfies~\eqref{eq:metric version into 1-eps} with $\alpha=\beta^{1/(1-\e)}$, except that the target space is $L_{1-\e}(\N\times \R)$ rather than $\ell_{1-\e}$. By an approximation by simple functions we obtain the desired embedding into $\ell_{1-\e}$.
\end{proof}

\subsubsection{Quotients}\label{sec:quotients} Suppose that $G$ is a group that acts on a metric space $(X,d_X)$ by isometries. The quotient space $X/G=\{Gx\}_{x\in X}$ of all the orbits of $G$ can be equipped  with the following quotient metric $d_{X/G}:(X/G)\times (X/G)\to [0,\infty)$.
\begin{equation}\label{eq:def q metric}
\forall\, x,y\in X,\qquad d_{G/X}(Gx,Gy)\eqdef \inf_{(u,v)\in (Gx)\times (Gy)} d_X(u,v)=\inf_{g\in G} d_X(gx,y).
\end{equation}
See~\cite[Section~5.19]{BH99} for more on this basic construction (in particular, for a verification that~\eqref{eq:def q metric} indeed gives a metric).

Given $k\in \N$, we will consider the Hamming cube to be the vector space $\F_2^k$ over the field of two elements $\F_2$, equipped with the Hamming metric $d_{\F_2^k}:\F_2^k\times \F_2^k\to \N\cup \{0\}$ that is given by
$$
\forall\, x=(x_1,\ldots,x_k),y=(y_1,\ldots,y_k)\in \F_2^k,\qquad d_{\F_2^k}(x,y)=|\{j\in \k:\ x_j\neq y_j\}|.
$$
 Below, $\F_2^k$  will always be assumed to be equipped with the metric $d_{\F_2^k}$. The standard basis of $\F_2^k$ is denoted $e_1,\ldots,e_k$.

If $G$ is a group acting on $\F_2^k$ by isometries, and if it isn't too large, say, $|G|\le 2^{k/2}$, then all but an exponentially small fraction of the pairs $(x,y)\in \F_2^k\times \F_2^k$ satisfy $d_{\F_2^k}(Gx,Gy)\gtrsim k$. Specifically, there is a universal constant $\eta>0$ such that
\begin{equation}\label{eq:large typical distance}
|G|\le 2^{\frac{k}{2}}\implies \Big|\Big\{(x,y)\in \F_2^k\times \F_2^k:\ d_{\F_2^k/G}(x,y)\le \eta k\Big\}\Big|\le 2^{\frac53  k}.
\end{equation}
A simple counting argument which verifies~\eqref{eq:large typical distance} appears in the proof of~\cite[Lemma~3.2]{KN06}.

 The symmetric group $S_k$ acts isometrically on  $\F_2^k$ by permuting the coordinates, namely for each permutation $g$ of $\k$ and $x\in \F_2^k$ we write $gx=(x_{g^{-1}(1)},x_{g^{-1}(2)},\ldots,x_{g^{-1}(k)})$. A subgroup $G\le S_k$ of $S_k$ therefore acts by isometries on $\F_2^k$; below we will only consider quotients of the form $(\F_2^k/G,d_{\F_2^K/G})$ when $G$ is a transitive subgroup of $S_k$.

 \subsubsection{$\ell_1$ non-embeddability of snowflakes of (subsets of) hypercube quotients} In~\cite{KN06} we studied the $\ell_1$ embeddability of  quotients of $\F_2^k$. In particular, \cite[Corollary~3]{KN06} states that if $G$ is a transitive subgroup of $S_k$ with $|G|\le 2^{k/2}$, then
\begin{equation}\label{eq:prev quotient}
 \cc_1\big(\F_2^k/G,d_{\F_2^k/G}\big)\gtrsim \log k.
\end{equation}

In Remark~4 of~\cite{KN06} we (implicitly) asked about the sketchability of $\F_2^k/G$, by inquiring whether  its $(1/2)$-snowflake embeds into a Hilbert space with $O(1)$ distortion, as a possible alternative approach for obtaining integrality gaps (quantitatively stronger than what was known at the time) for the Goemans--Linial semidefinite relaxation of the Sparsest Cut problem. This hope was  realized in~\cite{DKSV06} for the special case when $G=\langle \mathfrak{S}_k\rangle\le S_k$ is the cyclic group that is generated by the cyclic shift $\mathfrak{S}_k=(1,2,\ldots,k)\in S_k$. Specifically, it follows from~\cite{DKSV06}  that there exists a large subset $M\subset \F_2^k$, namely $|\F_2^k\setminus M|\lesssim 2^k/k^2$, and a metric $\rho$ on $M/\langle \mathfrak{S}_k\rangle$ satisfying $\rho(\O,\O')\asymp d_{\F_2^k/\langle \mathfrak{S}_k\rangle}(\O,\O')$ for all pairs of orbits $\O,\O'\in M/\langle \mathfrak{S}_k\rangle$, and such that the metric space $(M/\langle \mathfrak{S}_k\rangle,\sqrt{\rho})$ embeds isometrically into $\ell_2$. Strictly speaking, a stronger statement than this was obtained in~\cite{DKSV06} for a larger metric space (namely, for the quotient of $\F_2^k\times \F_2^k$ by the group $\langle \mathfrak{S}_k\rangle\times \langle \mathfrak{S}_k\rangle$), but here it suffices to consider the above smaller metric space which inherits the stated properties.

Recalling Fact~\ref{fact:snowflake}, this discussion leads naturally, as a strategy towards proving Theorem~\ref{thm:our sketching statement}, to investigating whether a  lower bound as~\eqref{eq:prev quotient} holds for the $(1-\e)$-snowflake of the hypercube quotient $\F_2^k/G$ rather than that quotient itself. We will see that the method of~\cite{KN06} does not yield any such lower bound that tends to $\infty$ as $k\to \infty$ for fixed $\e>0$, but we do obtain the desired statement here, albeit with an asymptotically weaker lower bound than the $\log k$ of~\eqref{eq:prev quotient}. Note that an application of Theorem~\ref{thm:nonembed main} below to the above subset $M\subset \F_2^k$ from~\cite{DKSV06} yields Theorem~\ref{thm:our sketching statement}, because of Fact~\ref{fact:snowflake}.

\begin{theorem}[non-embeddability  of snowflakes of quotients of large subsets of the hypercube]\label{thm:nonembed main} Fix $k\in \N$ and $\e\in (0,\frac12)$. Let  $G$ be a transitive subgroup of $S_k$  with $|G|\le 2^{k/2}$. Then, every $M\subset \F_2^k$ with $|\F_2^k\setminus M|\le 2^k/\sqrt{\log k}$ satisfies
\begin{equation}\label{eq:lower snowflake}
\cc_1\Big(M/G,d_{\F_2^k/G}^{1-\e}\Big)\gtrsim (\log k)^{\frac12-
\e}.
\end{equation}
\end{theorem}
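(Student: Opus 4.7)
The plan is to combine the cut-cone representation of $\ell_1$-valued embeddings with Bourgain's noise sensitivity theorem~\cite{Bou02}, paralleling the structure of the KKL-based argument of~\cite{KN06} but substituting Bourgain's theorem for KKL. Suppose $\cc_1(M/G, d_{\F_2^k/G}^{1-\e}) \le \alpha$. The cut cone representation of $\ell_1$ pseudometrics yields nonnegative weights $\{\lambda_A\}_{A}$, indexed by the $G$-invariant subsets $A$ of $M$, satisfying
\[
d_{\F_2^k/G}(x,y)^{1-\e} \le \sum_A \lambda_A \bigl|\1_A(x) - \1_A(y)\bigr| \le \alpha\, d_{\F_2^k/G}(x,y)^{1-\e} \qquad \forall\, x, y \in M.
\]
Since $|\F_2^k \setminus M| \le 2^k/\sqrt{\log k}$ is a vanishing fraction of $\F_2^k$, I would extend each $A$ by zero to a $G$-invariant subset of $\F_2^k$ and work on the full cube throughout, absorbing the resulting error.

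Next I would integrate the two-sided inequality above against two distributions on $\F_2^k \times \F_2^k$. Averaging against the uniform distribution on pairs, the counting estimate~\eqref{eq:large typical distance} forces $d_{\F_2^k/G}(x,y) \gtrsim k$ on a positive fraction of pairs, which gives
\[
\sum_A \lambda_A \mu(A)(1-\mu(A)) \gtrsim k^{1-\e}.
\]
Now fix a scale $t \asymp k/\log k$ and average instead against the distribution of $(x, x+\1_R)$, where $x \in \F_2^k$ is uniform and $R \subset \k$ is a uniformly random subset of size $t$. Since $d_{\F_2^k/G}(x, x+\1_R) \le t$ for every such pair, the snowflaked distance is at most $t^{1-\e}$, whence
\[
\sum_A \lambda_A\, \E_{x,R}\bigl|\1_A(x) - \1_A(x+\1_R)\bigr| \lesssim \alpha\, t^{1-\e}.
\]
The left-hand expectation is (up to absolute constants) the noise sensitivity $\mathrm{NS}_\rho(\1_A)$ at scale $\rho = t/k \asymp 1/\log k$.

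The heart of the proof is a lower bound on $\mathrm{NS}_\rho(\1_A)$ that comes from Bourgain's theorem. Transitivity of $G$ on $\k$ forces all single-coordinate influences of the $G$-invariant function $\1_A$ to coincide, and this spectral regularity is exactly the hypothesis required by Bourgain's noise sensitivity theorem. It supplies a Fourier-concentration statement of the form
\[
\sum_{|S| \ge \sqrt{\log k}} \widehat{\1_A}(S)^2 \gtrsim \mu(A)(1-\mu(A)),
\]
which, through the Fourier expansion $\mathrm{NS}_\rho(f) = \sum_S \bigl(1-(1-2\rho)^{|S|}\bigr) \widehat f(S)^2$ together with the estimate $1-(1-2\rho)^{|S|} \gtrsim \rho |S|$ on the relevant levels, converts into
\[
\mathrm{NS}_{1/\log k}(\1_A) \gtrsim \frac{\mu(A)(1-\mu(A))}{\sqrt{\log k}}.
\]
Feeding this back into the scale-$t$ inequality and comparing with the large-distance inequality produces $\alpha \gtrsim (\log k)^{1-\e}/\sqrt{\log k} = (\log k)^{1/2 - \e}$, as claimed.

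The main obstacle is to extract the Fourier-concentration inequality from Bourgain's theorem in a form that is quantitative and uniform over all $G$-invariant Boolean functions, with the Fourier-level threshold $\asymp \sqrt{\log k}$ that survives the $(1-\e)$-snowflaking. This is precisely what the approach of~\cite{KN06} cannot provide: KKL controls only the total influence, a noise-sensitivity bound at the single scale $\rho \asymp 1/k$, and the resulting estimate $\alpha \gtrsim (\log k)\, k^{-\e}$ degenerates as soon as $\e > 0$. Bourgain's theorem, by contrast, controls noise sensitivity at the intermediate scale $\rho \asymp 1/\log k$, and it is this multi-scale strengthening over KKL that keeps the argument nontrivial throughout the range $\e \in (0,1/2)$ of Theorem~\ref{thm:nonembed main}.
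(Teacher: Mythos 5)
Your high-level strategy matches the paper's in spirit: replace KKL by Bourgain's noise sensitivity theorem, exploit transitivity of $G$ to control influences, and pair the counting estimate~\eqref{eq:large typical distance} at the macroscopic scale against a noise-scale average at $\rho\asymp1/\log k$. However, the cut-cone route you take has two gaps that the paper's Cheeger-to-Poincar\'e machinery is specifically designed to avoid.

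\textbf{Unbalanced cuts.} The cut-cone decomposition forces you to prove $\mathrm{NS}_{1/\log k}(\1_A)\gtrsim\mu(A)(1-\mu(A))/\sqrt{\log k}$ uniformly over \emph{all} $G$-invariant cuts $A$, including very unbalanced ones. But the hypothesis of the sharp form of Bourgain's theorem used here~\cite{KKO18} is $\max_j\mathsf{Inf}_j^{\le m}[\f]\le\mathsf{Var}[\f]^4/C^m$. Fact~\ref{fact} gives $\max_j\mathsf{Inf}_j^{\le m}\le(m/k)\mathsf{Var}$, so the hypothesis reduces to $mC^m/k\le\mathsf{Var}^3$, which holds only when $\mathsf{Var}$ is not too small. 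Nothing in the cut-cone decomposition prevents the coefficient mass from concentrating on highly unbalanced cuts (one would need to control $\sum_A\lambda_A$, which is not bounded by the embedding distortion alone). The paper never faces this: Lemma~\ref{lem:poincare on Y} only proves a Cheeger-type edge-expansion bound for \emph{balanced} subsets $Z$ (those with $\frac14\le\mu(Z)/\mu(X)\le\frac23$, so $\mathsf{Var}\asymp1$), and then invokes the abstract extraction lemma~\cite[Lemma~6]{KN06} to upgrade expansion-for-balanced-cuts into a Poincar\'e inequality on a large subset $Y$. That extraction step, which converts a statement about a restricted class of cuts into a statement about arbitrary $\ell_1$-valued functions, is the missing ingredient in your outline.

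\textbf{The subset $M$.} The plan to extend each cut $A$ by zero to $\F_2^k$ and ``absorb the resulting error'' is not worked out, and it is not clear it can be. For pairs $(x,x+\1_R)$ with one endpoint outside $\pi_G^{-1}(M)$, the embedding upper bound is simply unavailable, and the quantity you would need to absorb is of order $\mu(\F_2^k\setminus\pi_G^{-1}(M))\cdot\sum_A\lambda_A$, which again requires control on $\sum_A\lambda_A$. In the paper this correction happens at the level of edge-boundary measures (inequality~\eqref{eq:correct for edges outside}), comparing $\vartheta^{1/(\beta\log k)}$-measure of boundary edges to $\mu(\F_2^k\setminus\pi_G^{-1}(X))$; this is why the hypothesis $|\F_2^k\setminus M|\le 2^k/\sqrt{\log k}$ matches exactly the $1/\sqrt{\log k}$ produced by the Fourier tail bound.

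A smaller issue: the Fourier-concentration statement you extract, $\sum_{|S|\ge\sqrt{\log k}}\widehat{\1_A}(S)^2\gtrsim\mu(A)(1-\mu(A))$, is not what Bourgain's theorem gives. The correct statement from~\cite{KKO18} is that the Fourier mass above level $\asymp\log k$ (not $\sqrt{\log k}$) is $\gtrsim\mathsf{Var}/\sqrt{\log k}$ (not $\gtrsim\mathsf{Var}$). Coincidentally, feeding that into the noise-sensitivity formula at $\rho\asymp1/\log k$ — where the factor $1-(1-2\rho)^{|S|}$ is $\asymp1$ rather than $\asymp\rho|S|$ on the relevant levels — yields the identical bound $\mathrm{NS}_{1/\log k}\gtrsim\mathsf{Var}/\sqrt{\log k}$, so your end result is right even though the intermediate step is not. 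Also, to obtain the clean exponent $\frac12-\e$ one does need the Kindler--Kirshner--O'Donnell sharpening rather than Bourgain's original statement, which loses a lower-order factor; the paper is explicit about this distinction.
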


It would be interesting to determine the asymptotically sharp behavior (up to universal constant factors) in~\eqref{eq:lower snowflake} for $M=\F_2^k$, though understanding the dependence on the transitive subgroup $G\le S_k$ may be  challenging; see~\cite{BK97} for investigations along these lines. Even in the special case $G=\langle \mathfrak{S}_k\rangle$ we do not know the sharp bound, and in particular how it transitions from the  $(\log k)^{1/2-\e}$ of~\eqref{eq:lower snowflake}  to the $\log k$ of~\eqref{eq:prev quotient} as $\e\to 0$ (it could be that neither bound is tight).

\subsubsection{Bourgain's Fourier tails versus the Kahn--Kalai--Linial influence of variables}\label{Bourgain} In~\cite[Theorem~3.8]{KN06} we applied the important theorem~\cite{KKL88} of Kahn, Kalai and Linial on the influence of variables on Boolean functions to show that if $G$ is a transitive subgroup of $S_k$, then every $f:\F_2^k/G\to \ell_1$ satisfies the following Cheeger/Poincar\'e inequality.
\begin{equation}\label{eq:quote KN KKL}
\frac{1}{4^k}\sum_{(x,y)\in \F_2^k\times \F_2^k} \big\|f(Gx)-f(Gy)\big\|_1\lesssim \frac{1}{\log k}\sum_{j=1}^k\frac{1}{2^k}\sum_{x\in \F_2^k}\left\|f\big(G(x+e_j)\big)-f(Gx)\right\|_1.
\end{equation}
Fix $(\e,\alpha)\in (0,1)\times [1,\infty)$. If $|G|\le 2^{k/2}$ and   $d_{\F_2^k/G}(Gx,Gy)^{1-\e}\le \|f(Gx)-f(Gy)\|_1\le \alpha d_{\F_2^k/G}(Gx,Gy)^{1-\e}$ for $x,y\in \F_2^k$, then
\begin{multline*}
k^{1-\e}\stackrel{\eqref{eq:large typical distance}}{\lesssim} \frac{1}{4^k}\sum_{(x,y)\in \F_2^k\times \F_2^k} d_{\F_2^k/G}(Gx,Gy)^{1-\e}\le \frac{1}{4^k}\sum_{(x,y)\in \F_2^k\times \F_2^k} \big\|f(Gx)-f(Gy)\big\|_1\stackrel{\eqref{eq:quote KN KKL}}{\lesssim}\frac{1}{\log k}\sum_{j=1}^k\frac{1}{2^k}\sum_{x\in \F_2^k}\left\|f\big(G(x+e_j)\big)-f(Gx)\right\|_1\\\le  \frac{\alpha}{\log k}\sum_{j=1}^k\frac{1}{2^k}\sum_{x\in \F_2^k}d_{\F_2^k/G}\big(G(x+e_j),Gy\big)^{1-\e}\stackrel{\eqref{eq:def q metric}}{\le} \frac{\alpha}{\log k}\sum_{j=1}^k\frac{1}{2^k}\sum_{x\in \F_2^k}d_{\F_2^k}(x+e_j,y)^{1-\e}=\frac{\alpha k}{\log k}.
\end{multline*}
It follows that
\begin{equation}\label{eq:lower snowflake KKL}
\cc_1\Big(\F_2^k/G,d_{\F_2^k/G}^{1-\e}\Big)\gtrsim \frac{\log k}{k^\e}.
\end{equation}
This is how~\eqref{eq:prev quotient} was derived in~\cite{KN06}, but the right hand side of~\eqref{eq:lower snowflake KKL} tends to $\infty$ as $k\to \infty$ only if $\e\lesssim (\log \log k)/\log k$.

Following the above use of the KKL theorem~\cite{KN06}, it was used elsewhere in place of applications~\cite{KV15,KN06} of a more substantial theorem of Bourgain~\cite{Bou02} on the Fourier tails of Boolean functions that are not close to juntas; notably this was first done by Krauthgamer and Rabani~\cite{KR09} to obtain an asymptotically improved analysis of the Khot--Vishnoi integrality gap~\cite{KV15} for Sparsest Cut. We have seen above that the KKL-based  approach does not yield Theorem~\ref{thm:nonembed main} (though, of course, one cannot rule out the availability of a more sophisticated application of KKL that does), but our use of Bourgain's theorem in the ensuing proof of Theorem~\ref{thm:nonembed main}  shows that this theorem does sometime provide qualitatively stronger geometric information. One should note here that~\eqref{eq:lower snowflake} follows from an application of a sharp form of Bourgain's theorem that was more recently obtained by Kindler, Kirshner, and O'Donnell~\cite{KKO18}; an application of Bourgain's original formulation yields a bound that is asymptotically weaker by a lower-order factor.

\section{Proof of Theorem~\ref{thm:nonembed main}}

Here we will prove Theorem~\ref{thm:nonembed main}, thereby completing the justification of Theorem~\ref{thm:our sketching statement}  as well.

\subsection{Fourier-analytic preliminaries} We will include here some basic facts and notation related to Fourier analysis on the hypercube $\F_2^k$; an extensive treatment of this topic can be found in e.g.~the monograph~\cite{Odo14}. Fix $k\in \N$. From now on, let $\mu=\mu_k$ denote the normalized counting measure on $\F_2^k$. Given $A\subset \k$, the  Walsh function $\W_{\!\! A}:\F_2^k\to \{-1,1\}$ and Fourier coefficient $\widehat{\f}(A)\in \R$ of a function $\f:\F_2^k\to \R$ are defined by
$$
\forall\, x\in \F_2^k,\qquad \W_{\!\!A}(x)= (-1)^{\sum_{j=1}^n x_j}\qquad\mathrm{and}\qquad \widehat{\f}(A)= \int_{\F_2^k} \f(x)\W_{\!\!A}(x)\ud\mu(x).
$$
The convolution $\f*\psi:\F_2^k\to \R$  of two functions $\f,\psi:\F_2^k\to \R$ is defined by
$$
\forall\, x\in \F_2^k,\qquad (\f*\psi)(x)=\int_{\F_2^k} \f(y)\psi(x+y)\ud\mu(y)=\sum_{A\subset \k} \widehat{\f}(A)\widehat{\psi}(A)\W_{\!\!A}(x),
$$
where the last equality is valid because the $2^k$ Walsh functions $\{\W_{\!\!A}\}_{A\subset \k}$ consist of all of the characters of the additive group $\F_2^k$, hence forming an orthonormal basis of $L_2(\mu)$. Suppose that $g\in \mathsf{GL}(\F_2^k)$ is an automorphism of $\F_2^k$. If $\f:\F_2^k\to \R$ is a $g$-invariant function, i.e., $\f(gy)=\f(y)$ for all $y\in \F_2^k$, then for every $\psi:\F_2^k\to \R$ and $x\in \F_2^k$,
\begin{multline*}
(\f*\psi)(x)=\int_{\F_2^k}\f(y)\psi(x+y)\ud \mu (y)=\int_{\F_2^k}\f(gy)\psi(x+y)\ud \mu (y)\\=\int_{\F_2^k}\f(z)\psi\big(x+g^{-1}z\big)\ud \mu (z)=
\int_{\F_2^k}\f(z)\psi\big(g^{-1}(gx+z)\big)\ud \mu (z)=\Big(\f*\big(\psi\circ g^{-1}\big)\Big)(gx).
\end{multline*}
In particular, under the above invariance assumption we have the identity
\begin{equation}\label{eq:invariance under g}
\|\f*\psi\|_{L_2(\mu)}= \Big\|\f*\big(\psi\circ g^{-1}\big)\Big\|_{L_2(\mu)}.
\end{equation}

Given $\p\in [0,1]$, let $\vartheta^\p: 2^{\F_2^k\times \F_2^k}\to [0,1]$ be the probability measure that is defined by setting for each $(x,y)\in \F_2^k\times \F_2^k$,
\begin{equation}\label{eq:theta measure}
\vartheta^\p(x,y)\eqdef \frac{\p^{d_{\F_2^k}(x,y)}(1-\p)^{k-d_{\F_2^k}(x,y)}}{2^k}=\frac{1}{4^k}\prod_{j=1}^k \big(1+(1-2\p)(-1)^{x_j+y_j}\big)=\frac{1}{4^k}\sum_{A\subset \k} (1-2\p)^{|A|}\W_{\!\!A}(x+y).
\end{equation}
In other words, $\vartheta^\p(x,y)$ is equal to the probability that the ordered pair $(x,y)$ is the outcome of the following randomized selection procedure:  The first element $x\in \F_2^k$ is chosen uniformly at random, and the second element $y\in \F_2^k$ is obtained by changing the sign of each entry of $x$ independently with probability $\p$. Note in passing that  both marginals of $\t^\p$ are equal to $\mu$, i.e., $\t^\p(\Omega\times \F_2^k)=\t^\p(\F_2^k\times \Omega)=\mu(\Omega)$ for every $\Omega\subset \F_2^k$. Also, for every $\Omega\subset \F_2^k$ we have
\begin{equation}\label{eq:heat identity}
\begin{split}
\t^\p\big(\Omega\times (\F_2^k\setminus \Omega)\big)=\frac18 &\int_{\F_2^k\times \F_2^k} \Big((-1)^{\1_{\Omega}(x)}-(-1)^{\1_{\Omega}(y)}\Big)^2\ud \t^\p(x,y)\\&=\frac14 \Bigg(1-\int_{\F_2^k\times \F_2^k}(-1)^{\1_{\Omega}}(x)(-1)^{\1_{\Omega}}(y)\ud \t^\p(x,y)\Bigg)
=\frac14\sum_{A\subset\k} \Big(1- (1-2\p)^{|A|}\Big) \Big(\widehat{(-1)^{\1_{\Omega}}}(A)\Big)^2,
\end{split}
\end{equation}
where the last equality in~\eqref{eq:heat identity} is a direct consequence of Parseval's identity and the final expression in~\eqref{eq:theta measure} for $\vartheta^\p(\cdot,\cdot)$.

For $\f:\F_2^k\to \R$ and $j,m\in \k$, the level-$m$ influence of the $j$'th variable on $\f$, denoted $\mathsf{Inf}_j^{\le m}[\f]$, is the quantity
\begin{equation}\label{eq:jth influence}
\mathsf{Inf}_j^{\le m}[\f]= \sum_{\substack {A\subset \k\setminus\{j\}\\ |A|\le m-1}} \widehat{\f}(A\cup\{j\})^2=\Big\|\f*\mathscr{R}_j^{\le m}\Big\|_{L_2(\mu)}^2,
\end{equation}
where the last equality is a consequence of Parseval's identity, using the notation
\begin{equation}\label{eq:def Rjm}
\mathscr{R}_j^{\le m}\eqdef \sum_{\substack {A\subset \k\setminus\{j\}\\ |A|\le m-1}} \W_{\!\! A\cup\{j\}}.
\end{equation}
It follows from the first equation in~\eqref{eq:jth influence} that
\begin{equation}\label{eq:m variance}
\sum_{j=1}^k \mathsf{Inf}_j^{\le m}[\f]= \sum_{\substack {B\subset \k\\ |B|\le m}} |B|\widehat{\f}(B)^2\le m\sum_{\substack{B\subset \k\\ B\neq\emptyset}} \widehat{\f}(B)^2=m \Bigg(\int_{\F_2^k} \f^2\ud\mu-\widehat{\f}(\emptyset)^2\Bigg)=m\mathsf{Var}_\mu[\f],
\end{equation}
where $\mathsf{Var}_\mu[\cdot]$ denotes the variance with respect to the probability measure $\mu$. By considering the symmetric group $S_k$ as a subgroup of $\mathsf{GL}(\F_2^k)$, where the action is permutation of coordinates, an inspection of  definition~\eqref{eq:def Rjm} reveals that $\mathscr{R}_j^{\le m}\circ g= \mathscr{R}_{gj}^{\le m}$ for  $g\in S_k$ and $j,m\in \k$. By~\eqref{eq:invariance under g} and the second equality in~\eqref{eq:jth influence}, if $\f:\F_2^k\to \R$ is $g$-invariant, then
$$
\forall\, j,m\in\k,\qquad \mathsf{Inf}_j^{\le m}[\f]=\mathsf{Inf}_{g^{-1}j}^{\le m}[\f].
$$
A combination of this observation with~\eqref{eq:m variance} yields the following statement, which we record for ease of later reference.

\begin{fact}\label{fact} Fix $k\in \N$. Let $G$ be a subgroup of $S_k$ that acts transitively on the coordinates $\k$. Suppose that $\f:\F_2^k\to \R$ is a $G$-invariant function, i.e., $f(gx)=f(x)$ for every $g\in G$ and $x\in \F_2^n$. Then, for every $m\in \k$ we have
$$
\max_{j\in \k} \mathsf{Inf}_j^{\le m}[\f]\le \frac{m}{k}\mathsf{Var}_\mu[\f].
$$
\end{fact}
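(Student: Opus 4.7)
The statement is essentially an averaging argument that combines two ingredients already assembled in the paragraphs immediately preceding it: the invariance identity $\mathsf{Inf}_j^{\le m}[\f]=\mathsf{Inf}_{g^{-1}j}^{\le m}[\f]$ for every $g$ under which $\f$ is invariant, and the level-$m$ Poincaré-type bound $\sum_{j=1}^k \mathsf{Inf}_j^{\le m}[\f]\le m\,\mathsf{Var}_\mu[\f]$ recorded in~\eqref{eq:m variance}. The plan is to use transitivity of $G$ on $\k$ to upgrade equality of influences along $G$-orbits into equality of \emph{all} per-coordinate influences, and then to extract the maximum via the sum bound.

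Concretely, I would fix $j,j'\in\k$ and use transitivity to choose $g\in G$ with $g^{-1}j'=j$. Since $\f$ is $G$-invariant, the identity $\mathsf{Inf}_{j'}^{\le m}[\f]=\mathsf{Inf}_{g^{-1}j'}^{\le m}[\f]$ (already established in the text just above the fact, via the relations $\mathscr{R}_j^{\le m}\circ g=\mathscr{R}_{gj}^{\le m}$, \eqref{eq:invariance under g}, and the $L_2$ expression for influence in~\eqref{eq:jth influence}) gives $\mathsf{Inf}_{j'}^{\le m}[\f]=\mathsf{Inf}_{j}^{\le m}[\f]$. Thus the map $j\mapsto \mathsf{Inf}_j^{\le m}[\f]$ is constant on $\k$; denote this common value by $I$.

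Finally, summing over $j\in\k$ and applying~\eqref{eq:m variance} yields
$$
k\,I=\sum_{j=1}^k \mathsf{Inf}_j^{\le m}[\f]\le m\,\mathsf{Var}_\mu[\f],
$$
so $\max_{j\in\k}\mathsf{Inf}_j^{\le m}[\f]=I\le (m/k)\,\mathsf{Var}_\mu[\f]$, as desired. There is essentially no obstacle here: the only point that requires care is to make sure the invariance identity is applied in the right direction (i.e.\ that transitivity on coordinates, not merely on the hypercube, is what is needed), which is immediate from the $S_k$-action on $\k$. The work in this fact is really done by the Parseval-based identity~\eqref{eq:jth influence} and the variance bound~\eqref{eq:m variance}; the transitivity of $G$ simply promotes the sum bound into a pointwise bound on the largest influence.
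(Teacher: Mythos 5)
Your proof is correct and follows exactly the route the paper takes: the paper records the relation $\mathsf{Inf}_j^{\le m}[\f]=\mathsf{Inf}_{g^{-1}j}^{\le m}[\f]$ for $G$-invariant $\f$ (via~\eqref{eq:invariance under g} and~\eqref{eq:jth influence}), uses transitivity to conclude all level-$m$ influences coincide, and then divides the sum bound~\eqref{eq:m variance} by $k$. No meaningful difference.
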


Throughout what follows, given a subgroup $G\le S_k$, we denote by $\pi_G:\F_2^k\to \F_2^k/G$ its associated quotient mapping, i.e., $\pi_G(x)=Gx$ for all $x\in \F_2^k$. We denote by $\mu_{\F_2^k/G}=\mu\circ\pi_G^{-1}$ the probability measure on $\F_2^k/G$ that is given by
$$
\forall\, \O\in \F_2^k/G,\qquad \mu_{\F_2^k/G}(\O)=\mu(\O).
$$
In a similar vein, for every $\p\in [0,1]$ the probability measure $\t^\p$ on $\F_2^k\times \F_2^k$ that is given in~\eqref{eq:theta measure} descends to a probability measure $\t^\p_{\F_2^k/G}=\t^\p\circ (\pi_G\times \pi_G)^{-1}$ on $(\F_2^k/G)\times (\F_2^k/G)$ by setting
$$
\forall\, \O,\O'\subset \F_2^k/G,\qquad \t^\p_{\F_2^k/G}(\O,\O')=\t^\p(\O\times \O').
$$

\subsection{A Cheeger/Poincar\'e inequality for transitive  quotients} Our main technical result is the following inequality.

\begin{lemma}\label{lem:poincare on Y}There is a universal constant $\beta\in (0,1)$ with the following property. Fix an integer  $k\ge 55$ and a transitive subgroup $G$ of $S_k$. Suppose that $X\subset \F_2^k/G$ is a sufficiently large subset in the following sense.
\begin{equation}\label{eq:mu lower assyumption sqrt}
\mu_{\F^k_2/G}(X)\ge 1-\frac{1}{\sqrt{\log k}}.
\end{equation}
Then there is a further subset $Y\subset X$ with $\mu_{\F_2^k/G}(Y)\ge \frac34\mu_{\F_2^k/G}(X)$ such that every  function $f:Y\to \ell_1$ satisfies
\begin{equation}\label{eq:poincare on Y}
\iint_{Y\times Y} \|f(\O)-f(\O')\|_1\ud \mu_{\F_2^k/G}(\O)\ud\mu_{\F_2^k/G}(\O')\lesssim \sqrt{\log k} \iint_{Y\times Y} \|f(\O)-f(\O')\|_1\ud\vartheta_{\F_2^n/G}^{\frac{1}{\beta\log k}}(\O,\O').
\end{equation}
\end{lemma}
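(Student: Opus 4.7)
My plan is to reduce the inequality to a Cheeger-type estimate for cut functions on $Y$, and then prove the latter by combining the Kindler--Kirshner--O'Donnell~\cite{KKO18} sharp form of Bourgain's noise-sensitivity theorem with the influence bound of Fact~\ref{fact}, together with a Markov-truncation choice of $Y$ that controls noise-leakage outside $X$. By the cut-cone decomposition of $\ell_1$-semimetrics, any $f:Y\to\ell_1$ induces a semimetric that splits as a positive combination of cut semimetrics $(\O,\O')\mapsto|\1_A(\O)-\1_A(\O')|$ with $A\subset Y$, so by linearity and Fubini it suffices to prove, with $\p:=1/(\beta\log k)$, that
\begin{equation}\label{eq:cut-goal-prop}
\mu_{\F_2^k/G}(A)\,\mu_{\F_2^k/G}(Y\setminus A)\lesssim \sqrt{\log k}\cdot\vartheta^\p_{\F_2^k/G}\bigl(A\times(Y\setminus A)\bigr)
\end{equation}
for every $A\subset Y$.

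Fix such an $A$ and lift to $\F_2^k$: the indicator $h:=\1_{\tilde A}$ with $\tilde A:=\pi_G^{-1}(A)$ is $G$-invariant, so Fact~\ref{fact} gives $\max_{j\in\k}\mathsf{Inf}_j^{\leq m}[h]\leq (m/k)\mathsf{Var}_\mu[h]$ for every $m\in\k$. Choose $m:=\lceil\beta\log k\rceil$ so that $(1-2\p)^{|S|}\leq e^{-1}$ whenever $|S|>m$; the level-$m$ influences of $h$ are then of order $(\log k)/k\cdot\mathsf{Var}_\mu[h]$, hence minuscule. Applying the KKO~\cite{KKO18} sharp form of Bourgain's junta theorem, the low-frequency Fourier mass obeys $\sum_{1\leq|S|\leq m}\hat h(S)^2\leq\tfrac12\mathsf{Var}_\mu[h]$; combined with Parseval and identity~\eqref{eq:heat identity}, this yields a whole-space Cheeger inequality
\begin{equation}\label{eq:whole-cheeger-prop}
\mathsf{Var}_\mu[h]\lesssim \vartheta^\p\bigl(\tilde A\times(\F_2^k\setminus\tilde A)\bigr).
\end{equation}

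To transfer~\eqref{eq:whole-cheeger-prop} to~\eqref{eq:cut-goal-prop}, define $Y$ as follows. Let $T_\p$ denote the Markov noise operator associated with $\vartheta^\p$, i.e., $T_\p\phi(x):=\sum_{y\in\F_2^k}2^k\vartheta^\p(x,y)\phi(y)$, so that the $G$-invariant function $\psi:=T_\p\1_{\tilde X^c}$ satisfies $\int\psi\,d\mu=\mu(\tilde X^c)\leq 1/\sqrt{\log k}$. By Markov's inequality, $\{\psi\leq c_1/\sqrt{\log k}\}$ has $\mu$-measure $\geq 1-1/c_1$; choosing $c_1$ suitably, set $\tilde Y:=\tilde X\cap\{\psi\leq c_1/\sqrt{\log k}\}$ and $Y:=\pi_G(\tilde Y)$, so that $\mu(Y)\geq\tfrac34\mu(X)$. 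For every $x\in\tilde Y$ the bound $(T_\p\1_{\tilde X^c})(x)\leq c_1/\sqrt{\log k}$ yields $\vartheta^\p(\tilde A,\F_2^k\setminus\tilde X)\leq (c_1/\sqrt{\log k})\mu(\tilde A)$; the residual leakage $\vartheta^\p(\tilde A,\tilde X\setminus\tilde Y)$ into the excised buffer (of $\mu$-measure $\leq\tfrac14$) is bounded by reversibility and Cauchy--Schwarz by $\sqrt{\mu(\tilde A)\mu(\tilde X\setminus\tilde Y)}\leq\sqrt{\mu(\tilde A)}/2$. Combining these leakage bounds with~\eqref{eq:whole-cheeger-prop} and the trivial inequality $\mu(A)\mu(Y\setminus A)\leq\mathsf{Var}_\mu[h]$ gives~\eqref{eq:cut-goal-prop}.

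The hardest part will be this last step: the Cauchy--Schwarz leakage bound $\sqrt{\mu(\tilde A)\mu(\tilde X\setminus\tilde Y)}$ only suffices when $\mu(\tilde A)\lesssim 1/\log k$, and for larger $\mu(\tilde A)$ it appears that one needs either an iterated/fixed-point refinement of $Y$ (truncating on $T_\p\1_{Y_0^c}$ after a first pass) or a spectral estimate for $T_\p$ on the excised set to obtain exactly the $\sqrt{\log k}$ factor in~\eqref{eq:cut-goal-prop} and not a polynomially larger one. The use of the KKO~\cite{KKO18} refinement of Bourgain's theorem in Step~2 is essential: Bourgain's original formulation yields a weaker bound on the low-frequency mass and would incur a further lower-order loss, consistent with Remark~\ref{rem:KM}.
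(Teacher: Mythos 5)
Your high-level plan shares the paper's ingredients (transitivity via Fact~\ref{fact}, KKO~\cite{KKO18} for Fourier tails, the heat identity~\eqref{eq:heat identity}), but there are two problems, one fixable and one that is exactly where the proof is hard.

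First, the KKO application is mis-stated. The theorem you are invoking gives, for a low-influence $G$-invariant indicator, a \emph{high}-frequency mass lower bound
$\sum_{|S|>m}\hat h(S)^2\gtrsim \mathsf{Var}_\mu[h]/\sqrt m$ with $m\asymp\log k$; it does \emph{not} give $\sum_{1\le|S|\le m}\hat h(S)^2\le\tfrac12\mathsf{Var}_\mu[h]$. Your claimed bound would yield a whole-cube $\ell_1$-Poincar\'e inequality at noise rate $\p=1/(\beta\log k)$ with constant $O(1)$, which is false --- the $G$-orbit of a Hamming half-space is a counterexample, with noise sensitivity $\Theta(\sqrt\p)=\Theta(1/\sqrt{\log k})$ while the variance is $\Theta(1)$. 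So~\eqref{eq:whole-cheeger-prop} must itself carry the $\sqrt{\log k}$ loss; the $\sqrt{\log k}$ in~\eqref{eq:poincare on Y} comes from KKO, not from the truncation.

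Second, and more fundamentally, your strategy requires the cut inequality~\eqref{eq:cut-goal-prop} for \emph{every} $A\subset Y$, and your Markov truncation $\tilde Y=\tilde X\cap\{T_\p\1_{\tilde X^c}\le c_1/\sqrt{\log k}\}$ only controls the leakage of $\vartheta^\p$ \emph{out of $\tilde X$}, not out of $\tilde Y$. For unbalanced $A$, the term you need to discard is $\vartheta^\p(\tilde A\times(\tilde X\setminus\tilde Y))$, and here you would need something of order $\mu(\tilde A)/\sqrt{\log k}$; your Cauchy--Schwarz bound $\sqrt{\mu(\tilde A)\mu(\tilde X\setminus\tilde Y)}$ is off by a factor as large as $\sqrt{\log k/\mu(\tilde A)}$. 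You acknowledge this, and propose an iterated/fixed-point truncation --- but making such a scheme converge is essentially the content of a nontrivial lemma, not a detail. The paper avoids this entirely: it only proves the balanced-cut expansion
\[
\t^{\frac{1}{\beta\log k}}\Big(\pi_G^{-1}(Z)\times\big(\pi_G^{-1}(X)\setminus\pi_G^{-1}(Z)\big)\Big)\gtrsim\frac{1}{\sqrt{\log k}}\cdot\frac{\mu_{\F_2^k/G}(Z)}{\mu_{\F_2^k/G}(X)}
\qquad\text{for }\tfrac14\le\tfrac{\mu_{\F_2^k/G}(Z)}{\mu_{\F_2^k/G}(X)}\le\tfrac23,
\]
for which the crude leakage bound $\vartheta^\p(\F_2^k\times(\F_2^k\setminus\tilde X))=\mu(\F_2^k\setminus\tilde X)\le1/\sqrt{\log k}$ suffices (precisely because the KKO-derived expansion is a constant multiple of $1/\sqrt{\log k}$ when the variance is $\Theta(1)$, so one only has to beat it by choosing $\beta\le\gamma^2/4$). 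It then invokes \cite[Lemma~6]{KN06}, whose whole purpose is to convert balanced-cut expansion with respect to $(\sigma,\tau)$ into an $\ell_1$-Poincar\'e inequality on a further subset $Y$ with $\sigma(Y)\ge 3/4$, bypassing any need to handle small cuts or to control $T_\p\1_{\F_2^k\setminus\tilde Y}$ self-referentially. Unless you supply a proof of the all-cuts inequality (or a convergent fixed-point construction of $Y$), your argument is incomplete at exactly the step you flag as the hardest.
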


Prior to proving Lemma~\ref{lem:poincare on Y} we shall assume its validity for the moment and proceed to prove Theorem~\ref{thm:nonembed main}.

\begin{proof}[Proof of Theorem~\ref{thm:nonembed main} assuming Lemma~\ref{lem:poincare on Y}] Fix $\alpha\ge 1$ and suppose that $f:M/G\to \ell_1$ satisfies
\begin{equation}\label{eq:f assumptions in final conclusion}
\forall\,  x,y\in M,\qquad d_{\F_2^k/G}(Gx,Gy)^{1-\e}\le \|f(Gx)-f(Gy)\|_1\le \alpha d_{\F_2^k/G}(Gx,Gy)^{1-\e}.
\end{equation}
Our task is to bound $\alpha$ from below by the right hand side of~\eqref{eq:lower snowflake}.

An application of  Lemma~\ref{lem:poincare on Y} to $X=M/\F_2$, which satisfies the requirement~\eqref{eq:mu lower assyumption sqrt} by the assumption of Theorem~\ref{thm:nonembed main}, produces a subset $Y$ with $\mu(\pi^{-1}_G(Y))\ge \frac12$ for which~\eqref{eq:poincare on Y} holds true. It follows that
\begin{equation}\label{eq:use poincare on Y}
\begin{split}
&\!\!\!\iint_{\pi_G^{-1}(Y)\times \pi_G^{-1}(Y)}d_{\F_2^k/G}(Gx,Gy)^{1-\e}\ud\mu(x)\ud\mu(y)\stackrel{\eqref{eq:poincare on Y}\wedge\eqref{eq:f assumptions in final conclusion}} {\lesssim} \alpha\sqrt{\log k}\iint_{(\F_2^k/G)\times (\F_2^k/G)} d_{\F_2^k/G}(\O,\O')^{1-\e}\ud\vartheta_{\F_2^n/G}^{\frac{1}{\beta\log k}}(\O,\O')\\&\stackrel{\eqref{eq:def q metric}}{\le} \alpha\sqrt{\log k}\int_{\F_2^k\times \F_2^k} d_{\F_2^k} (x,y)^{1-\e}\ud\vartheta^{\frac{1}{\beta\log k}}(x,y)\stackrel{\eqref{eq:theta measure}}{=}\alpha\sqrt{\log k}\sum_{\ell=0}^k \ell^{1-\e}\binom{k}{\ell} \Bigg(\frac{\beta}{\log k}\Bigg)^\ell \Bigg(1-\frac{\beta}{\log k}\Bigg)^{k-\ell}\le \alpha\sqrt{\log k}\Bigg(\frac{\beta k}{\log k}\Bigg)^{1-\e}.
\end{split}
\end{equation}
Since  $|G|\le 2^{k/2}$, by~\eqref{eq:large typical distance} there exists $\eta\gtrsim 1$ such that, since $\mu(\pi_G^{-1}(Y))\gtrsim 1$, we have
\begin{multline*}
\mu\times \mu \Big(\big\{(x,y)\in \pi_G^{-1}(Y)\times \pi_G^{-1}(Y):\ d_{F_2^k}(Gx,Gy)> \eta k\big\}\Big)\\\ge \mu\big(\pi_G^{-1}(Y)\big)^2- \mu\times \mu \Big(\big\{(x,y)\in \F_2^k\times \F_2^k:\ d_{F_2^k}(Gx,Gy)\le \eta k\big\}\Big)\ge \mu\big(\pi_G^{-1}(Y)\big)^2-2^{-\frac{k}{3}}\gtrsim 1.
\end{multline*}
So, the first quantity in~\eqref{eq:use poincare on Y} is at least a constant multiple of $k^{1-\e}$, and the desired lower bound on $\alpha$ follows.
\end{proof}

\begin{proof}[Proof of Lemma~\ref{lem:poincare on Y}] Suppose that $Z\subset X$ satisfies
\begin{equation}\label{eq:ratio Z}
\frac{1}{4}\le \frac{\mu_{\F^k/G}(Z)}{\mu_{\F^k/G}(X)}\le \frac23.
\end{equation}
Writing $\q=\mu_{\F^k/G}(Z)$,  the function $(-1)^{\1_{\pi_G^{-1}(Z)}}:\F_2^k\to \{-1,1\}$ is $G$-invariant and its variance is equal to $4\q(1-\q)\asymp 1$. Let $\beta\in (2/\log k,1)$ be a small enough universal constant that will be determined later. Also, let $C\in (1,\infty)$ be a large enough universal constant, specifically take $C$ to  be the universal constant that appears in the statement of~\cite[Theorem~3.1]{KKO18}. If we denote $m=\lceil \beta \log k\rceil$, then it follows from Fact~\ref{fact} that, provided $\beta$ is a sufficiently small constant, we have
$$
\max_{j\in \k} \mathsf{Inf}_j^{\le m}\Big[(-1)^{\1_{\pi_G^{-1}(Z)}}\Big]\le \frac{m}{k}\mathsf{Var}\Big[(-1)^{\1_{\pi_G^{-1}(Z)}}\Big]\le \frac{\mathsf{Var}\Big[(-1)^{\1_{\pi_G^{-1}(Z)}}\Big]^4}{C^m}.
$$
This is precisely the assumption of~\cite[Theorem~3.1]{KKO18}, from which we deduce  the following Fourier tail bound.
\begin{equation}\label{eq:use Bourgain}
\sum_{\substack{A\subset \k\\ |A|>\lceil \beta \log k\rceil}} \Bigg(\widehat{(-1)^{\1_{\pi_G^{-1}(Z)}}}(A)\Bigg)^2=\sum_{\substack{A\subset \k\\ |A|>m}} \Bigg(\widehat{(-1)^{\1_{\pi_G^{-1}(Z)}}}(A)\Bigg)^2\gtrsim \frac{\mathsf{Var}\Big[(-1)^{\1_{\pi_G^{-1}(Z)}}\Big]}{\sqrt{m}}\asymp \frac{1}{\sqrt{\beta\log k}}.
\end{equation}

Next, by the identity~\eqref{eq:heat identity}, we have
\begin{equation}\label{eq:edges going out but maybe not in X}
\begin{split}
\t^{\frac{1}{\beta\log k}}\Big(\pi_G^{-1}(Z)\times \big(\F_2^k\setminus \pi_G^{-1}(Z)\big)\Big)=\frac14&\sum_{A\subset\k} \Bigg(1- \Bigg(1-\frac{2}{\beta\log k}\Big)^{|A|} \Bigg)\Bigg(\widehat{(-1)^{\1_{\pi_G^{-1}(Z)}}}(A)\Bigg)^2\\&\ge \frac14\Bigg(1- \Big(1-\frac{2}{\beta\log k}\Big)^{\lceil \beta \log k\rceil+1} \Bigg)\sum_{\substack{A\subset \k\\ |A|> \lceil \beta \log k\rceil}}\Bigg(\widehat{(-1)^{\1_{\pi_G^{-1}(Z)}}}(A)\Bigg)^2\stackrel{\eqref{eq:use Bourgain}}{\ge} \frac{\gamma}{\sqrt{\beta \log k}},
\end{split}
\end{equation}
for some universal constant $\gamma\in (0,1)$. Therefore,
\begin{equation}\label{eq:correct for edges outside}
\begin{split}
\t^{\frac{1}{\beta\log k}}\Big(\pi_G^{-1}&(Z)\times \big(\pi^{-1}_G(X)\setminus \pi_G^{-1}(Z)\big)\Big)\ge \t^{\frac{1}{\beta\log k}}\Big(\pi_G^{-1}(Z)\times \big(\F_2^k\setminus \pi_G^{-1}(Z)\big)\Big)- \t^{\frac{1}{\beta\log k}}\Big(\F_2^k\times \big((\F_2^k\setminus \pi_G^{-1}(X)\big)\Big)\\&= \t^{\frac{1}{\beta\log k}}\Big(\pi_G^{-1}(Z)\times \big(\F_2^k\setminus \pi_G^{-1}(Z)\big)\Big)-\mu\big(\F_2^k\setminus \pi_G^{-1}(X)\big)\stackrel{\eqref{eq:mu lower assyumption sqrt}\wedge \eqref{eq:edges going out but maybe not in X}}{\ge} \frac{\gamma}{\sqrt{\beta \log k}}-\frac{1}{\sqrt{\log k}}\stackrel{\eqref{eq:ratio Z}}{\asymp} \frac{1}{\sqrt{\log k}}\cdot \frac{\mu_{\F^k/G}(Z)}{\mu_{\F^k/G}(X)},
\end{split}
\end{equation}
where the final step of~\eqref{eq:correct for edges outside} holds provided $1\asymp \beta\le \gamma^2/4$, which is our final requirement from the universal constant $\beta$.

Observe that
\begin{equation}\label{eq:theta has big mathh on X times X}
\begin{split}
\t_{\F_2^k/G}^{\frac{1}{\beta\log k}}(X\times X)\ge \t^{\frac{1}{\beta\log k}}\big(\F_2^k\times \F_2^k\big)-\t^{\frac{1}{\beta\log k}}\Big(\big(\F_2^k\setminus \pi_G^{-1}&(X)\big)\times \F_2^k\Big)-\t^{\frac{1}{\beta\log k}}\Big(\F_2^k\times \big(\F_2^k\setminus \pi_G^{-1}(X)\big)\Big)\\ &=1-2\mu\big(\F_2^k\setminus \pi_G^{-1}(X)\big)=1-2\big(1-\mu_{\F_2^k/G}(X)\big)\stackrel{\eqref{eq:mu lower assyumption sqrt}}{\ge} 1-\frac{2}{\sqrt{\log k}}\asymp 1.
\end{split}
\end{equation}
Hence,
\begin{equation}\label{eq:big subset expand}
\frac{\t_{\F_2^k/G}^{\frac{1}{\beta\log k}}\Big(\big(Z\times (X\setminus Z)\big)\cup \big((X\setminus Z)\times Z\big) \Big) }{\t_{\F_2^k/G}^{\frac{1}{\beta\log k}}(X\times X)}=\frac{2\t^{\frac{1}{\beta\log k}}\Big(\pi_G^{-1}(Z)\times \big(\pi^{-1}_G(X)\setminus \pi_G^{-1}(Z)\big)\Big) }{\t_{\F_2^k/G}^{\frac{1}{\beta\log k}}(X\times X)}\stackrel{\eqref{eq:correct for edges outside}}{\gtrsim} \frac{1}{\sqrt{\log k}}\cdot \frac{\mu_{\F^k/G}(Z)}{\mu_{\F^k/G}(X)}.
\end{equation}
We are now in position to apply~\cite[Lemma~6]{KN06} with the parameters  $\d=\frac14$, $\alpha\asymp1/\sqrt{\log k}$, and  the probability measures
\begin{equation}\label{eq:def sigma tau}
\sigma\eqdef \frac{\mu_{\F_2^k/G}}{\mu_{\F_2^k/G}(X)}: 2^X\to [0,1]\qquad\mathrm{and}\qquad \tau\eqdef \frac{\t_{\F_2^k/G}^{\frac{1}{\beta\log k}}}{\t_{\F_2^k/G}^{\frac{1}{\beta\log k}}(X\times X)}: 2^{X\times X}\to [0,1].
\end{equation}
Due to~\eqref{eq:big subset expand}, by the proof of~\cite[Lemma~6]{KN06} (specifically, equation~(7) in~\cite{KN06}) there exists a subset $Y\subset \F_2^k/G$ with $\sigma(Y)\ge 3/4$, i.e., $\mu_{\F_2^k/G}(Y)\ge 3\mu_{\F_2^k/G}(X)/4$, such that every $f:Y\to L_1$ satisfies
\begin{multline*}
\iint_{Y\times Y} \|f(\O)-f(\O')\|_1\ud \mu_{\F_2^k/G}(\O)\ud\mu_{\F_2^k/G}(\O') \stackrel{\eqref{eq:ratio Z}\wedge \eqref{eq:def sigma tau}}{\asymp}\iint_{Y\times Y} \|f(\O)-f(\O')\|_1\ud \sigma(\O)\ud\sigma(\O')\\
\lesssim \sqrt{\log k}\iint_{Y\times Y} \|f(\O)-f(\O')\|_1\ud\tau(\O,\O') \stackrel{\eqref{eq:theta has big mathh on X times X}\wedge \eqref{eq:def sigma tau}}{\asymp} \sqrt{\log k} \iint_{Y\times Y} \|f(\O)-f(\O')\|_1\ud\vartheta_{\F_2^n/G}^{\frac{1}{\beta\log k}}(\O,\O').\tag*{\qedhere}
\end{multline*}
\end{proof}


\bibliography{AKR}

\begin{thebibliography}{10}

\bibitem{AMM85}
I.~Aharoni, B.~Maurey, and B.~S. Mityagin.
\newblock Uniform embeddings of metric spaces and of {B}anach spaces into
  {H}ilbert spaces.
\newblock {\em Israel J. Math.}, 52(3):251--265, 1985.

\bibitem{AMS99}
N.~Alon, Y.~Matias, and M.~Szegedy.
\newblock The space complexity of approximating the frequency moments.
\newblock {\em J. Comput. System Sci.}, 58(1, part 2):137--147, 1999.
\newblock Twenty-eighth Annual ACM Symposium on the Theory of Computing
  (Philadelphia, PA, 1996).

\bibitem{AKR18}
A.~Andoni, R.~Krauthgamer, and I.~Razenshteyn.
\newblock Sketching and {E}mbedding are {E}quivalent for {N}orms.
\newblock {\em SIAM J. Comput.}, 47(3):890--916, 2018.

\bibitem{ALN08}
S.~Arora, J.~R. Lee, and A.~Naor.
\newblock Euclidean distortion and the sparsest cut.
\newblock {\em J. Amer. Math. Soc.}, 21(1):1--21, 2008.

\bibitem{Ass83}
P.~Assouad.
\newblock Plongements lipschitziens dans {${\bf R}^{n}$}.
\newblock {\em Bull. Soc. Math. France}, 111(4):429--448, 1983.

\bibitem{BL00}
Y.~Benyamini and J.~Lindenstrauss.
\newblock {\em Geometric nonlinear functional analysis. {V}ol. 1}, volume~48 of
  {\em American Mathematical Society Colloquium Publications}.
\newblock American Mathematical Society, Providence, RI, 2000.

\bibitem{Bou02}
J.~Bourgain.
\newblock On the distributions of the {F}ourier spectrum of {B}oolean
  functions.
\newblock {\em Israel J. Math.}, 131:269--276, 2002.

\bibitem{BK97}
J.~Bourgain and G.~Kalai.
\newblock Influences of variables and threshold intervals under group
  symmetries.
\newblock {\em Geom. Funct. Anal.}, 7(3):438--461, 1997.

\bibitem{BDK65}
J.~Bretagnolle, D.~Dacunha-Castelle, and J.-L. Krivine.
\newblock Fonctions de type positif sur les espaces {$L^{p}$}.
\newblock {\em C. R. Acad. Sci. Paris}, 261:2153--2156, 1965.

\bibitem{BH99}
M.~R. Bridson and A.~Haefliger.
\newblock {\em Metric spaces of non-positive curvature}, volume 319 of {\em
  Grundlehren der Mathematischen Wissenschaften [Fundamental Principles of
  Mathematical Sciences]}.
\newblock Springer-Verlag, Berlin, 1999.

\bibitem{CK13}
J.~Cheeger and B.~Kleiner.
\newblock Realization of metric spaces as inverse limits, and bilipschitz
  embedding in {$L_1$}.
\newblock {\em Geom. Funct. Anal.}, 23(1):96--133, 2013.

\bibitem{CKN11}
J.~Cheeger, B.~Kleiner, and A.~Naor.
\newblock Compression bounds for {L}ipschitz maps from the {H}eisenberg group
  to {$L_1$}.
\newblock {\em Acta Math.}, 207(2):291--373, 2011.

\bibitem{DS97}
G.~David and S.~Semmes.
\newblock {\em Fractured fractals and broken dreams}, volume~7 of {\em Oxford
  Lecture Series in Mathematics and its Applications}.
\newblock The Clarendon Press, Oxford University Press, New York, 1997.
\newblock Self-similar geometry through metric and measure.

\bibitem{DKSV06}
N.~R. Devanur, S.~A. Khot, R.~Saket, and N.~K. Vishnoi.
\newblock Integrality gaps for sparsest cut and minimum linear arrangement
  problems.
\newblock In {\em S{TOC}'06: {P}roceedings of the 38th {A}nnual {ACM}
  {S}ymposium on {T}heory of {C}omputing}, pages 537--546. ACM, New York, 2006.

\bibitem{DL97}
M.~M. Deza and M.~Laurent.
\newblock {\em Geometry of cuts and metrics}, volume~15 of {\em Algorithms and
  Combinatorics}.
\newblock Springer-Verlag, Berlin, 1997.

\bibitem{Hei01}
J.~Heinonen.
\newblock {\em Lectures on analysis on metric spaces}.
\newblock Universitext. Springer-Verlag, New York, 2001.

\bibitem{Ind06}
P.~Indyk.
\newblock Stable distributions, pseudorandom generators, embeddings, and data
  stream computation.
\newblock {\em J. ACM}, 53(3):307--323, 2006.

\bibitem{IM99}
P.~Indyk and R.~Motwani.
\newblock Approximate nearest neighbors: towards removing the curse of
  dimensionality.
\newblock In {\em S{TOC} '98 ({D}allas, {TX})}, pages 604--613. ACM, New York,
  1999.

\bibitem{KKL88}
J.~Kahn, G.~Kalai, and N.~Linial.
\newblock The influence of variables on boolean functions (extended abstract).
\newblock In {\em 29th Annual Symposium on Foundations of Computer Science,
  White Plains, New York, USA, 24-26 October 1988}, pages 68--80. {IEEE}
  Computer Society, 1988.

\bibitem{Kal85}
N.~J. Kalton.
\newblock Banach spaces embedding into {$L_0$}.
\newblock {\em Israel J. Math.}, 52(4):305--319, 1985.

\bibitem{KPR84}
N.~J. Kalton, N.~T. Peck, and J.~W. Roberts.
\newblock {\em An {$F$}-space sampler}, volume~89 of {\em London Mathematical
  Society Lecture Note Series}.
\newblock Cambridge University Press, Cambridge, 1984.

\bibitem{KM13}
D.~Kane and R.~Meka.
\newblock A {PRG} for {L}ipschitz functions of polynomials with applications to
  sparsest cut.
\newblock In {\em S{TOC}'13---{P}roceedings of the 2013 {ACM} {S}ymposium on
  {T}heory of {C}omputing}, pages 1--10. ACM, New York, 2013.

\bibitem{KN06}
S.~Khot and A.~Naor.
\newblock Nonembeddability theorems via {F}ourier analysis.
\newblock {\em Math. Ann.}, 334(4):821--852, 2006.

\bibitem{KV15}
S.~A. Khot and N.~K. Vishnoi.
\newblock The unique games conjecture, integrability gap for cut problems and
  embeddability of negative-type metrics into {$\ell_1$}.
\newblock {\em J. ACM}, 62(1):Art. 8, 39, 2015.

\bibitem{KKO18}
G.~Kindler, N.~Kirshner, and R.~O'Donnell.
\newblock Gaussian noise sensitivity and {F}ourier tails.
\newblock {\em Israel J. Math.}, 225(1):71--109, 2018.

\bibitem{KR09}
R.~Krauthgamer and Y.~Rabani.
\newblock Improved lower bounds for embeddings into {$L_1$}.
\newblock {\em SIAM J. Comput.}, 38(6):2487--2498, 2009.

\bibitem{KOR00}
E.~Kushilevitz, R.~Ostrovsky, and Y.~Rabani.
\newblock Efficient search for approximate nearest neighbor in high dimensional
  spaces.
\newblock {\em SIAM J. Comput.}, 30(2):457--474, 2000.

\bibitem{Kwa70}
S.~Kwapie\'n.
\newblock {U}nsolved {P}roblems.
\newblock {\em Studia Math.}, 38:467--483, 1970.
\newblock Problem~3, page~469.

\bibitem{LN06}
J.~R. Lee and A.~Naor.
\newblock ${L}_p$ metrics on the {H}eisenberg group and the {G}oemans--{L}inial
  conjecture.
\newblock In {\em 47th Annual {IEEE} Symposium on Foundations of Computer
  Science {(FOCS} 2006), 21-24 October 2006, Berkeley, California, USA,
  Proceedings}, pages 99--108. {IEEE} Computer Society, 2006.

\bibitem{LT77}
J.~Lindenstrauss and L.~Tzafriri.
\newblock {\em Classical {B}anach spaces. {I}}.
\newblock Springer-Verlag, Berlin-New York, 1977.
\newblock Sequence spaces, Ergebnisse der Mathematik und ihrer Grenzgebiete,
  Vol. 92.

\bibitem{LT79}
J.~Lindenstrauss and L.~Tzafriri.
\newblock {\em Classical {B}anach spaces. {II}}, volume~97 of {\em Ergebnisse
  der Mathematik und ihrer Grenzgebiete [Results in Mathematics and Related
  Areas]}.
\newblock Springer-Verlag, Berlin-New York, 1979.
\newblock Function spaces.

\bibitem{MN04}
M.~Mendel and A.~Naor.
\newblock Euclidean quotients of finite metric spaces.
\newblock {\em Adv. Math.}, 189(2):451--494, 2004.

\bibitem{Nao10}
A.~Naor.
\newblock {$L_1$} embeddings of the {H}eisenberg group and fast estimation of
  graph isoperimetry.
\newblock In {\em Proceedings of the {I}nternational {C}ongress of
  {M}athematicians. {V}olume {III}}, pages 1549--1575. Hindustan Book Agency,
  New Delhi, 2010.

\bibitem{NRS05}
A.~Naor, Y.~Rabani, and A.~Sinclair.
\newblock Quasisymmetric embeddings, the observable diameter, and expansion
  properties of graphs.
\newblock {\em J. Funct. Anal.}, 227(2):273--303, 2005.

\bibitem{NY18}
A.~Naor and R.~Young.
\newblock Vertical perimeter versus horizontal perimeter.
\newblock {\em Ann. of Math. (2)}, 188(1):171--279, 2018.

\bibitem{Nik72}
E.~M. Niki\v{s}in.
\newblock A resonance theorem and series in eigenfunctions of the {L}aplace
  operator.
\newblock {\em Izv. Akad. Nauk SSSR Ser. Mat.}, 36:795--813, 1972.

\bibitem{Odo14}
R.~O'Donnell.
\newblock {\em Analysis of {B}oolean functions}.
\newblock Cambridge University Press, New York, 2014.

\bibitem{SS02}
M.~Saks and X.~Sun.
\newblock Space lower bounds for distance approximation in the data stream
  model.
\newblock In {\em Proceedings of the {T}hirty-{F}ourth {A}nnual {ACM}
  {S}ymposium on {T}heory of {C}omputing}, pages 360--369. ACM, New York, 2002.

\bibitem{WW75}
J.~H. Wells and L.~R. Williams.
\newblock {\em Embeddings and extensions in analysis}.
\newblock Springer-Verlag, New York-Heidelberg, 1975.
\newblock Ergebnisse der Mathematik und ihrer Grenzgebiete, Band 84.

\end{thebibliography}
\bibliographystyle{abbrv}

\end{document}